%%
%% This is file `sample-sigconf.tex',
%% generated with the docstrip utility.
%%
%% The original source files were:
%%
%% samples.dtx  (with options: `sigconf')
%%
%% IMPORTANT NOTICE:
%%
%% For the copyright see the source file.
%%
%% Any modified versions of this file must be renamed
%% with new filenames distinct from sample-sigconf.tex.
%%
%% For distribution of the original source see the terms
%% for copying and modification in the file samples.dtx.
%%
%% This generated file may be distributed as long as the
%% original source files, as listed above, are part of the
%% same distribution. (The sources need not necessarily be
%% in the same archive or directory.)
%%
%% The first command in your LaTeX source must be the \documentclass command.
\documentclass[nonconf,sigconf]{}
\settopmatter{printacmref=false} % Removes citation information below abstract
\renewcommand\footnotetextcopyrightpermission[1]{} % removes footnote with conference information in first column
% \pagestyle{plain} % removes running headers

% \documentclass{sig-alternate-05-2015}

%%%% As of March 2017, [siggraph] is no longer used. Please use sigconf (above) for SIGGRAPH conferences.

%%%% As of May 2020, [sigchi] and [sigchi-a] are no longer used. Please use sigconf (above) for SIGCHI conferences.

%%%% Proceedings format for SIGPLAN conferences
% \documentclass[sigplan, anonymous, review]{acmart}

%%%% Proceedings format for conferences using one-column small layout
% \documentclass[acmsmall,review]{acmart}

%%
%% \BibTeX command to typeset BibTeX logo in the docs
\AtBeginDocument{%
  \providecommand\BibTeX{{%
    \normalfont B\kern-0.5em{\scshape i\kern-0.25em b}\kern-0.8em\TeX}}}

%% Rights management information.  This information is sent to you
%% when you complete the rights form.  These commands have SAMPLE
%% values in them; it is your responsibility as an author to replace
%% the commands and values with those provided to you when you
%% complete the rights form.
% JJ: Updated 12/3/2021
\copyrightyear{2021}
\acmYear{2021}
\setcopyright{acmcopyright}
\acmConference[HSCC '21]{24th ACM InternationalConference on Hybrid Systems: Computation and Control}{May 19--21,2021}{Nashville, TN, USA}
\acmBooktitle{24th ACM International Conference on Hybrid Systems: Computationand Control (HSCC '21), May 19--21, 2021, Nashville, TN, USA}
\acmPrice{15.00}
\acmDOI{10.1145/3447928.3456649}
\acmISBN{978-1-4503-8339-4/21/05}

%%
%% Submission ID.
%% Use this when submitting an article to a sponsored event. You'll
%% receive a unique submission ID from the organizers
%% of the event, and this ID should be used as the parameter to this command.
\acmSubmissionID{art_53}

\begin{CCSXML}
<ccs2012>
<concept>
<concept_id>10003752.10003790.10011119</concept_id>
<concept_desc>Theory of computation~Abstraction</concept_desc>
<concept_significance>500</concept_significance>
</concept>
<concept>
<concept_id>10010147.10010257.10010293.10010075.10010296</concept_id>
<concept_desc>Computing methodologies~Gaussian processes</concept_desc>
<concept_significance>500</concept_significance>
</concept>
<concept>
<concept_id>10002950.10003648.10003700</concept_id>
<concept_desc>Mathematics of computing~Stochastic processes</concept_desc>
<concept_significance>300</concept_significance>
</concept>
<concept>
<concept_id>10010520.10010553.10010554.10010557</concept_id>
<concept_desc>Computer systems organization~Robotic autonomy</concept_desc>
<concept_significance>300</concept_significance>
</concept>
<concept>
<concept_id>10003752.10003790.10002990</concept_id>
<concept_desc>Theory of computation~Logic and verification</concept_desc>
<concept_significance>500</concept_significance>
</concept>
</ccs2012>
\end{CCSXML}

\ccsdesc[500]{Theory of computation~Abstraction}
\ccsdesc[500]{Computing methodologies~Gaussian processes}
\ccsdesc[300]{Mathematics of computing~Stochastic processes}
\ccsdesc[300]{Computer systems organization~Robotic autonomy}
\ccsdesc[500]{Theory of computation~Logic and verification}

\keywords{Switched stochastic systems, Gaussian process regression, Formal synthesis, Safe autonomy, Uncertain Markov decision processes}

%%
%% The majority of ACM publications use numbered citations and
%% references.  The command \citestyle{authoryear} switches to the
%% "author year" style.
%%
%% If you are preparing content for an event
%% sponsored by ACM SIGGRAPH, you must use the "author year" style of
%% citations and references.
%% Uncommenting
%% the next command will enable that style.
%%\citestyle{acmauthoryear}

% \usepackage{bbm} # No more math bbm, introduces Type 3 Font
\usepackage{xspace}
\usepackage{subcaption}
\captionsetup{compatibility=false}
\usepackage{todonotes}
\usepackage{mathtools}

%% Commands are defined in the commands.tex folder
%% Domains / sets

\newcommand{\reals}{\mathbb{R}}
\newcommand{\naturals}{\mathbb{N}}

\newcommand\indicator{\textnormal{\textbf{1}}}
\newcommand{\interindicator}[2]{\indicator^{#2}_{#1}}

%% General Problem
    % Distribution D

\newcommand{\x}{\textbf{{x}}}   % \textnormal prevents being italicized in Thms etc.
\newcommand{\pu}{\textbf{u}}

\newcommand{\policy}{\pi}
\newcommand{\optpolicy}{\policy^*}
\newcommand{\nearoptpolicy}{\policy^{\opterror}}
\newcommand{\Setofpolicies}{\Pi}
\newcommand{\FullSet}{X}

\newcommand{\distBound}{\epsilon}
\newcommand{\noise}{\textbf{v}}
\newcommand{\subgparam}{\theta}
\newcommand{\dataset}{\mathrm{D}}
\newcommand{\distribution}{p}
\newcommand{\gest}{\hat g}
\renewcommand{\path}{\omega}
\newcommand{\pathX}{\path_{\x}}

\newcommand{\PathFin}{\Omega_{\x}^{\text{fin}}}
\newcommand{\kernel}{k}
\newcommand{\ROISet}{R}
\newcommand{\roi}{r}
\newcommand{\opterror}{\varepsilon}

\newcommand{\SHS}{SHS\xspace}

%% LTL Specs
\newcommand{\Spec}{\varphi}
\newcommand{\LTLf}{LTL$_f$\xspace}
\newcommand{\LTL}{LTL\xspace}
\newcommand{\Globally}{\mathcal{G}}
\newcommand{\Eventually}{\mathcal{F}}
\newcommand{\Until}{\mathcal{U}}
\newcommand{\Next}{\mathcal{X}}
\newcommand{\trace}{\rho}
\newcommand{\APs}{AP}
\newcommand{\Prop}{\mathfrak{p}}
\newcommand{\unsafe}{u}
\newcommand{\safe}{s}

%% Formal Spaces
\newtheorem{assumption}{Assumption}
\newtheorem{definition}{Definition}
\newtheorem{problem}{Problem}
\newtheorem{theorem}{Theorem}
\newtheorem{lemma}{Lemma}

\newtheorem{corollary}{Corollary}

%% Gaussian Processes and RKHS
\renewcommand{\kernel}{\kappa}
\newcommand{\mean}{\mu}
\newcommand\InfoGainBound{\gamma^m_\kernel}

%% DFA Notation
\newcommand{\dfa}{\textsc{DFA}\xspace}
\newcommand{\DFAA}{\mathcal{A}_\Spec}
\newcommand{\DFAStates}{S}

\newcommand{\DFATransition}{\delta}
\newcommand{\DFAState}{s}
\newcommand{\DFAFinalState}{S_F}

%% MDP and IMDP notations

\newcommand{\imdp}{\textsc{IMDP}\xspace}

\newcommand{\pimdp}{\textsc{PIMDP}\xspace}
\newcommand{\PIMDPS}{\mathcal{P}}

\newcommand{\I}{\mathcal{I}}

\newcommand{\probDist}{\mathcal{D}}

\newcommand{\feasibleDist}[2]{\gamma_{#1}^{#2}}

\newcommand{\Amdp}{A}

\newcommand{\amdp}{a}
\newcommand{\qmdp}{q}

\newcommand{\Qimdp}{Q}
\newcommand{\Aimdp}{A}
\newcommand{\qimdp}{q}

\newcommand{\qimdpprime}{\qimdp^\prime}

\newcommand{\qunsafe}{\qimdp_{\unsafe}}
\newcommand{\aimdp}{u}

\newcommand{\Pup}{\hat{P}}
\newcommand{\Plow}{\check{P}}
\newcommand{\plow}{\check{\p}}
\newcommand{\p}{p}
\newcommand{\pup}{\hat{\p}}
\newcommand{\pthresh}{\bar{\p}}
\newcommand{\popt}{\p^*}
\newcommand{\Qyes}{\Qimdp^{\text{yes}}}
\newcommand{\Qno}{\Qimdp^{\text{no}}}
\newcommand{\Qposs}{\Qimdp^{\text{?}}}

\newcommand{\pathimdp}{\omega_{\I}}

\newcommand{\pathimdpfin}{\pathimdp^{\mathrm{fin}}}

\newcommand{\Pathimdp}{\mathit{Paths}}
\newcommand{\Pathimdpfin}{\mathit{Paths}^{\mathrm{fin}}}
\newcommand{\last}{\mathit{last}}
\newcommand{\adversary}{\xi}
\newcommand{\Adversary}{\Xi}
\newcommand{\set}[1]{\{#1\}}

\captionsetup[subfigure]{belowskip=-1pt, aboveskip=-1pt}

\newif\ifarxiv
\arxivtrue
% \arxivfalse
\ifarxiv
\usepackage{fancyhdr}
\fancypagestyle{arxiv}{%
  \fancyhf{} % clear all fields
  
  \fancyhead[C]{\textbf{Updated from version in the 2021 Proc. of the ACM Int. Conf. on Hybrid Systems: Computation and Control (HSCC 2021)}}
}%
\fi

% \makeatletter
% \let\ps@arxiv
% \makeatother
\ifarxiv\thispagestyle{arxiv}
\else\thispagestyle{empty}\fi
%%
%% end of the preamble, start of the body of the document source.
\begin{document}
\title{Strategy Synthesis for Partially-known Switched Stochastic Systems}
\author{John Jackson}
\email{john.m.jackson@colorado.edu}
\affiliation{%
%   \department{Aerospace Engineering Sciences}
  \institution{University of Colorado Boulder}
  \city{Boulder}
  \state{CO}
  \country{USA}
}
\author{Luca Laurenti}
\email{l.laurenti@tudelft.nl}
\affiliation{%
% \department{??}
  \institution{Delft University of Technology}
  \city{Delft}
  \country{Netherlands}
}
\author{Eric Frew}
\email{eric.frew@colorado.edu}
\affiliation{%
% \department{Aerospace Engineering Sciences}
  \institution{University of Colorado Boulder}
  \city{Boulder}
  \state{CO}
  \country{USA}
}
\author{Morteza Lahijanian}
\email{morteza.lahijanian@colorado.edu}
\affiliation{%
% \department{Aerospace Engineering Sciences}
  \institution{University of Colorado Boulder}
  \city{Boulder}
  \state{CO}
  \country{USA}
}
\begin{abstract}

We present a data-driven framework for strategy synthesis for partially-known switched stochastic systems.
The properties of the system
are specified using \textit{linear temporal logic (\LTL) over finite traces} (\LTLf), which is as expressive as \LTL and enables interpretations over finite behaviors.
The framework first learns the unknown dynamics via Gaussian process regression.  Then, it builds a formal abstraction of the switched system in terms of an uncertain Markov model, namely an \emph{Interval Markov Decision Process (\imdp)}, by accounting for both the stochastic behavior of the system and the uncertainty in the learning step.
Then, we synthesize a strategy on the resulting \imdp that maximizes the satisfaction probability of the \LTLf specification and is robust against all the uncertainties in the abstraction.  This strategy is then refined into a switching strategy for the original stochastic system.
We show that this strategy is near-optimal and provide a bound on its distance (error) to the optimal strategy.
We experimentally validate our framework on various case studies, including both linear and non-linear switched stochastic systems.
\end{abstract}

\maketitle
\thispagestyle{arxiv}
\pagestyle{arxiv}
\section{Introduction}
\label{sec:intro}

Switched stochastic systems are a class of \textit{stochastic hybrid systems} ({\SHS}s) that provide a powerful framework for modeling complex real-world systems.  They consist of a finite set of stochastic processes that capture the uncertainty in the evolution of the underlying system with the ability to switch between these processes, representing control options.  These models are employed in numerous application domains such as robotics \cite{luna:wafr:2014}, biological systems \cite{julius2008stochastic}, and cyber-physical systems \cite{haesaert2017certified}.  Many of the applications are in \textit{safety-critical} domains and require formal analysis of the underlying system.  
Existing formal approaches to analysis and synthesis of {\SHS}s are model based,
and the resulting guarantees apply only to the model of the system.
In reality, the true model of the system is often partially-known due to, e.g., the use of black-box controllers, or the lack of a closed-form analytical representation.  
This poses a major challenge for formal reasoning, which also relates to the classical question of \textit{how to extend formal guarantees from models to systems?}  
This work investigates a data-driven approach to address this challenge.

Formal verification and synthesis for {\SHS}s has been well studied in recent years, e.g., \cite{doyen2018verification,kushner2013numerical,soudjani2015fau,lahijanian2015formal,laurenti2020formal,cauchi2019efficiency}.  The proposed approaches can be generally divided into two categories.  One is a set of approaches based on numerical analysis of stochastic differential (difference) equations  with  asymptotic guarantees in terms of weak convergence \cite{kushner2013numerical}.  The other set of approaches is based on a finite \textit{abstraction} of the \SHS to a Markov process, and their formal guarantees are on probabilistic satisfaction of temporal logic specifications, namely \textit{linear temporal logic} (\LTL) and \textit{probabilistic computation tree logic} (PCTL) \cite{baier2008principles}.  Despite the recent advances, both categories of approaches assume that the \SHS model is fully known and perfectly represents the underlying system.  This assumption, however, is often violated, 
especially in modern systems where AI-modules are increasingly employed as black-box components.

A few recent studies focus on dealing with unknown dynamical systems, e.g., \cite{Dutta:IFAC:2018,Haesaert:Automatica:2017,Kenanian:Automatica:2019,Ahmadi:CDC:2017}.  The proposed approaches are based on data-driven methods and assume some knowledge on the system.  Work  \cite{Haesaert:Automatica:2017,Kenanian:Automatica:2019} impose a strong assumption that the underlying system is linear.  Then, they employ techniques such as Bayesian inference and chance-constrained optimization to provide probabilistic guarantees for the unknown system from a finite set of data.  
Work \cite{Ahmadi:CDC:2017} relaxes the linearity assumption and proposes approximation of the unknown dynamics through a piecewise-polynomial function.  Then, the safety of the system is assessed through barrier certificates.  While this method can deal with a more general class of systems, it is unclear how the guarantees can be extended to the underlying system.

An effective method to deal with unknown dynamics in safety-critical applications is \textit{Gaussian process} (GP) regression \cite{rasmussen2003gaussian}. 
The advantage of GP regression is in its ability to quantify the bound on the uncertainty in the learning process as derived in \cite{srinivas2012information,Germain2016pac,chowdhury2017kernelized, lederer2019uniform}. 
This has led to an increased use of GPs in safe learning frameworks, e.g., \cite{akametalu2014reachability,sui2015safe,berkenkamp2017safe,polymenakos2019safety,jackson2020safety}.  
In most work, the main objective is to learn safe policies via reinforcement learning with the exception of work \cite{jackson2020safety}, which considers a safety verification problem of unknown systems with noisy measurements.  The proposed framework uses GPs to construct a Markov abstraction for an invariant set (safety) analysis from a noisy dataset.  In all these work, the assumption is that the underlying system is deterministic and the specification is simple, whereas the focus here is on unknown stochastic systems with complex specifications.

This work presents a formal synthesis framework for stochastic systems with partially-known models in the form of switched stochastic processes.  
The framework is able to provide formal guarantees on the behavior of the underlying system from a set of data.  
The specification language is \textit{\LTL over finite traces} (\LTLf) \cite{de2013linear}, which has the same expressively as \LTL, but the interpretations of its formulas are over finite behaviors making it an appropriate language for highly uncertain (unknown) systems such as those considered here.  
The approach is based on finite abstraction and employs GP regression for its construction.
Given a set of data, the framework first learns the unknown dynamics using GP regression.  
Then, an abstraction is constructed in the form of an uncertain Markov process, namely \textit{interval Markov decision process} (\imdp) using the known and learned dynamics as well as the errors bounds of the learning process.  
Given an \LTLf property, a strategy is computed on the abstraction that maximizes the probability of satisfaction of the property and is robust against all the errors introduced in the learning and abstraction steps.  
This not only results in a switching (control) strategy for the underlying system, but it also provides a lower bound probability for the satisfaction of the \LTLf property for every initial state.

The main contribution of this work is a theoretical and computational framework for control synthesis for partially-known stochastic systems from a given set of data.  
This work shows a method of harvesting the power of machine learning techniques, in particular GP regression, in a formal synthesis framework.
Unlike classical model-based approaches, this framework enables the extension of the formal guarantees to the underlying system.  
This is achieved by formally incorporating both the uncertainty related to the stochastic behavior of the system and the uncertainty related to the partial knowledge of the system in the abstraction, and then accounting for these uncertainties in generating a robust switching strategy.
As a result, this framework allows for synthesis for complex systems from simplified (low-fidelity) models, i.e., linearized models; hence, enabling the use of rich and matured techniques for simple (linear) models in control design for complex systems.  
Furthermore, this paper presents derivations for probabilistic bounds for the transition probabilities of the \imdp abstraction as well as proofs of correctness for the methodology.
Finally, the synthesis framework is demonstrated through a series of case studies on unknown stochastic systems with both linear and nonlinear dynamics with various \LTLf specifications.

\section{Problem Formulation}
\label{sec:problem}

Consider a partially-known switched stochastic process as described below:
\begin{equation}\label{eq:system}
    \x_{k+1} = f_{\pu_k}(\x_k) + g_{\pu_k}(\x_k) + \noise_k, 
\end{equation}
where $k\in \naturals,$ $\x_k \in \reals^n$,  $\pu_k \in U$, and $U=\{1,...,m \}$ is a finite set of \textit{modes} or \textit{actions}. For every $u\in U,$ $f_u:\mathbb{R}^n \to \mathbb{R}^n$ is a (known a-priori) continuous function and $g_u:\mathbb{R}^n\to \mathbb{R}^n$ is a possibly nonlinear continuous function representing the unknown dynamics of Process \eqref{eq:system}.
The noise term $\noise_k$ is a random variable with an independent and stationary $\subgparam$-sub-Gaussian distribution $\distribution_{v}$. 
This class of distributions are those whose tails decay at least as fast as a Gaussian with variance $\subgparam^2$, including all distributions with bounded support the Gaussian distribution itself \cite{massart2007concentration}.

Intuitively, $\x_k$ is a stochastic process driven by the noise process $\noise_k$, where some or all the dynamics are unknown in each mode, and $\pu_k$ indicates the current mode (and hence switching between the modes).  
Process \eqref{eq:system} is a rich model that allows for the inclusion of modeling errors in addition to noise.  For instance, consider a nonlinear noisy control system with a finite set of controls $U$. If only a linear approximate model of the system is available, then Process~\eqref{eq:system} can be used to represent it, where $f_u$ becomes the approximate linear model of the system and $g_u$ is all the higher-order dynamics that are not modelled under each controller $u$.

We assume to have a collection of state-action-state measurements $\dataset=\{(x_i,u_i,x^+_i)_{i=1}^m\}$ generated by Process~\eqref{eq:system}, where $x^+_i\in\reals^n$ is a sample of one-step evolution of Process~\eqref{eq:system} when it is initialized at $x_i\in\reals^n$ in mode $u_i\in U$.  Our goal is to use $\dataset$ to learn $g_u$ for each $u\in U$. In order to achieve this correctly, we need an assumption on the regularity of $g_u$. The following assumption suffices to guarantee that $g_u$ can be learned arbitrarily well via GP regression.
\begin{assumption}
    \label{assump:smoothnes}
    For a compact set $\FullSet\subset \mathbb{R}^n$, let $\kernel:\mathbb{R}^n\times \mathbb{R}^n\to \mathbb{R}_{> 0}$ be a given kernel and $\mathcal{H}_\kernel(\FullSet)$ the reproducing kernel Hilbert space (RKHS) of functions over $\FullSet$ corresponding to $\kernel$ with norm $\| \cdot \|_\kernel$ \cite{srinivas2012information}.  Then, for each $u \in U$  and $i\in \{1,...,n \},$ $g^{(i)}_u(\cdot) \in \mathcal{H}_\kernel(\FullSet)$ and for a constant $B_i>0,$ it holds that  $\| g^{(i)}_u(\cdot) \|_\kernel \leq B_i$, where $g_u^{(i)}$ is the $i$-th component of $g_u$.
\end{assumption}
\noindent
Assumption \ref{assump:smoothnes} is a standard assumption \cite{srinivas2012information,jackson2020safety}, which is intimately related to the continuity of $g_u$, as 
discussed
in Section \ref{sec:IMDP}. 
For instance, assuming that $\kernel$ is the widely used squared exponential kernel, we obtain that $\mathcal{H}_\kernel(\FullSet)$ is a space of functions that is dense with respect to the set of continuous functions on a compact set $\FullSet \subset \reals^n,$ i.e., members of $\mathcal{H}_\kernel(\FullSet)$ can approximate any continuous function on $\FullSet$ arbitrarily well \cite{steinwart2001influence}.

Let $\pathX = x_0 \xrightarrow{u_0} x_1 \xrightarrow{u_1}  \ldots $ be a sample path (trajectory) of Process \eqref{eq:system}  and denote by $\pathX(k)=x_{k},$  the state of $\pathX$ at time $k$.  
Further, we denote by $\PathFin$ the set of all sample paths with finite length, i.e, the set of trajectories $\pathX^{k}= x_0 \xrightarrow{u_0} x_1 \xrightarrow{u_1}  \ldots \xrightarrow{u_{k-1}} x_k$ for all $k\in \naturals$. With a slight abuse of notation, given path $\pathX$, we denote by $\pathX^k$ the prefix of length $k+1$ of $\pathX$.  

Given a finite path, a \textit{switching strategy} chooses the mode (action) of Process~\eqref{eq:system}.

\begin{definition}[Switching Strategy]
    \label{def:switching-strategy}
    A switching strategy $\policy_\x: \PathFin \to U$ is a function that maps a finite path $\pathX^{k}\in \PathFin$ to a mode (action) $u \in U$. The set of all switching strategies is denoted by $\Setofpolicies_\x.$
\end{definition}
\noindent
For $u\in U$, 
a Borel measurable set $\FullSet\subseteq \reals^n$, and $x\in \reals^n,$ 
call 
$$T^{u}(\FullSet \mid x)=\int \indicator_\FullSet(f_u(x)+g_u(x)+v)\distribution_{v}(\bar v)d \bar v,$$ the stochastic transition function  induced by Process \eqref{eq:system} in mode $u\in U$, where 
\begin{equation*}
    \indicator_\FullSet(x) =
    \begin{cases}
        1 & \text{if } x\in \FullSet\\
        0 & \text{otherwise}
    \end{cases}
\end{equation*}
is the indicator function.
From the definition of $T^{u}(\FullSet\mid x)$  it follows that, given a strategy $\policy_\x$, for a time horizon $[0,N],$ Process \eqref{eq:system} defines a stochastic process on the canonical space $\Omega=(\mathbb{R}^n)^{N+1}$ with the Borel $\sigma-$algebra $\mathcal{B}(\Omega)$ induced by the product topology and with the unique probability measure $P$ generated by $T^{\policy_\x}$ and a (fixed) initial condition $x_0\in \mathbb{R}^n$ 
such that for $k\in \{1,...,N\}$
\begin{align*}
    &P[\pathX^N(0)\in \FullSet] = \indicator_{\FullSet}(x_0),\\
    &P[\pathX^N(k) \in \FullSet \mid \pathX^N(k-1)=x, \policy_\x] = T^{\policy_\x(\pathX^{k-1})}(\FullSet \mid x). 
\end{align*}
Furthermore, for $N=\infty$, $P$ is still uniquely defined by $T^u$ by the \emph{Ionescu-Tulcea extension theorem} \cite{abate2014effect}.

In this paper, we are interested in the properties of Process~\eqref{eq:system} in a compact set $\FullSet \subset \reals^n$.  Specifically, we analyze the behavior of Process~\eqref{eq:system} with respect to a finite set of closed regions of interest $\ROISet = \{\roi_1,\ldots,\roi_{|\ROISet|}\}$, where $\roi_i \subseteq \FullSet$.  To this end, we associate to each region $\roi_i$ the atomic proposition $\Prop_i$ such that $\Prop_i = \top$ (i.e., $\Prop_i$ is \textit{true}) if $x \in \roi_i$; otherwise $\Prop_i = \bot$ (i.e., $\Prop_i$ is \textit{false}).  
Let $\APs = \{\Prop_1, \ldots,\Prop_{|\ROISet|}\}$ denote the set of all atomic propositions and $L: \FullSet \rightarrow 2^{\APs}$ be the labeling function that assigns to state $x$ the set of atomic propositions that are true at $x$, i.e.,
$$\Prop_i \in L(x) \quad \Leftrightarrow \quad x \in \roi_i.$$
Then, we define the \textit{trace} or \textit{observation} of path 
$\pathX^{k}= x_0 \xrightarrow{u_0} x_1 \xrightarrow{u_1}  \ldots \xrightarrow{u_{k-1}} x_k$
to be
\begin{equation*}
	\trace = \trace_0 \trace_1 \ldots \trace_k,
\end{equation*}
where $\trace_i = L(x_i) \in 2^{\APs}$ for all $i \leq k$. With an abuse of notation we use $L(\pathX^k)$ to denote the trace of $\pathX^k$.

\subsection{Linear temporal logic on finite traces (\LTLf)}
\label{sec:ltlf}
In this work, we are interested in the temporal properties of Process~\eqref{eq:system} with respect to the regions of interest in $R$.  To express such properties, \textit{linear temporal logic} (\LTL) \cite{baier2008principles} is a popular choice of language given its rich expressivity and intuitive formalism.  
Here, we employ \textit{\LTL interpreted over finite traces} (\LTLf) \cite{de2013linear}, which has the same syntax as \LTL but its semantics is defined over finite traces.
 
\begin{definition}[\LTLf Syntax]
    An \LTLf formula $\Spec$ is built from a set of atomic propositions $\APs$ and is closed under the Boolean connectives as well as the ``next'' operator $\Next$ and the ``until'' operator $\Until$:
	\begin{equation*}
		\Spec ::= \top   \mid   \Prop   \mid   \neg \Spec   \mid   \Spec \wedge \Spec   \mid  \Next \Spec  \mid  \Spec \Until \Spec
	\end{equation*}
	where $\Prop \in \APs$, $\top$ is ``true'' or a tautology, and $\neg$ and $\wedge$ are the ``negation'' and ``and'' operators in Boolean logic, respectively.
\end{definition}

\noindent
The common temporal operators ``eventually'' ($\Eventually$) and ``globally'' ($\Globally$) are defined as: 
$$\Eventually \, \Spec = \top \, \Until \, \Spec \quad \text{and} \quad \Globally \, \Spec = \neg \Eventually \, \neg \Spec.$$
The semantics of \LTLf is defined as follows.

\begin{definition}[\LTLf Semantics]
	The semantics of an \LTLf formula $\Spec$ are defined over finite traces in $\APs^*$. 
	The set of all finite traces is $(2^{\APs})^*$. 
	Let $|\trace|$ denote the length of trace $\trace$ and $\trace_i$ be the $i$-th symbol of $\trace$. Further, $\trace, i \models \Spec$ is read as: ``the $i$-th step of trace $\trace$ is a model of $\Spec$.'' Then,
	\begin{itemize}
		\item $\trace , i \models \top,$
		\item $\trace , i \models \Prop$ \; \text{iff} \; \; $\Prop \in \trace_i,$
		\item $\trace , i \models \neg \Spec$ \; \text{iff} \; $\trace, i \not \models \Spec,$
		\item $\trace , i \models \Spec_1 \wedge \Spec_2$ \; \text{iff} \; $\trace, i \models \Spec_1$ and $\trace, i \models \Spec_2,$
		\item $\trace , i \models \Next \Spec$ \; \text{iff} \; $|\trace| > i+1$ and $\trace, i+1 \models \Spec,$
		\item $\trace , i \models \Spec_1 \Until \Spec_2$ \; \text{iff} \; $\exists j$ s.t. $ i \leq j < |\trace|$ and $\trace, j \models \Spec_2$ and $\forall k$, $i \leq k < j$, \; $\trace, k \models \Spec_1.$
	\end{itemize}
	Finite trace $\trace$ satisfies $\Spec$, denoted by $\trace \models \Spec$, if $\trace, 0 \models \Spec$.
\end{definition}
An \LTLf formula $\Spec$ defines a language $\mathcal{L}(\Spec)$ over the alphabet $2^{\APs}$. $\mathcal{L}(\Spec)$ is a regular language, more specifically, $$\mathcal{L(\Spec)} = \{\trace \in (2^{\APs})^* \mid \trace \models \Spec\}.$$ 

Given compact set $\FullSet \subset \reals^n$, its set of regions of interest $\ROISet$ and the corresponding set of atomic propositions $\APs$, and an \LTLf formula $\Spec$ defined over $\APs$, as in \cite{wells2020ltlf}, we say that path $\pathX$ of Process~\eqref{eq:system} satisfies $\Spec$ if there exists a prefix of $\pathX$ that is in the language of $\Spec$ and lies entirely in $\FullSet$, i.e., 
\begin{multline}
    \label{eq:satisfying-path}
    \pathX \models \Spec  \quad \Leftrightarrow \quad \exists k \in \naturals  \;\; s.t. \;\; L(\pathX^k) \in \mathcal{L}(\Spec) \text{ and } \\ \pathX^k(k') \in \FullSet \;\; \forall k' \leq k,
\end{multline}
where $L(\pathX^k) \in \big(2^{\APs}\big)^*$ is the trace (observation) of $\pathX^k$.

\subsection{Problem Formulation}
 The ideal goal of this work is, given an \LTLf formula $\Spec$, to synthesize a switching strategy $\optpolicy_\x$ such that under $\optpolicy_\x$ the probability of the paths of Process \eqref{eq:system} that satisfy $\Spec$ is maximized. Nevertheless, we should stress that, in general, the partial knowledge of Process \eqref{eq:system} and the limited amount of data available (not controllable a-priori) make it infeasible to find a switching strategy that maximizes such a probability. Hence, in Problem \ref{prob:Syntesis} we seek a near-optimal strategy such that, under this switching strategy, Process \eqref{eq:system} is guaranteed to satisfy $\Spec$ with a (high) probability greater than a given threshold with a quantifiable distance from the optimal probability. 
 \begin{problem}[Switching Strategy Synthesis]
 \label{prob:Syntesis}
    Given a partially-known switched stochastic system as defined in Process \eqref{eq:system}, a dataset $\dataset,$ a compact set $X$, an \LTLf property $\Spec$ defined over the regions of interest in $X$, and a probability threshold $\pthresh$, find a near-optimal switching strategy $\nearoptpolicy_\x$ that determines whether for every $x_0 \in \FullSet$
    \begin{equation*}
        P[	\pathX \models \Spec \mid \nearoptpolicy_\x, \pathX(0) = x_0] \geq \pthresh,
    \end{equation*}
    and quantify the corresponding error $\opterror_{x_0} \geq 0$ with respect to the optimal switching strategy, i.e., 
    \begin{equation*}
        | P[\pathX \models \Spec \mid \nearoptpolicy_\x, \pathX(0)=x_0] -\popt(x_0)| \leq \opterror_{x_0},
    \end{equation*}
where  $\popt(x_0)=\max_{\policy_\x \in \Setofpolicies_\x } P[	\pathX \models \Spec \mid \policy_\x, \pathX(0) = x_0]$.
 \end{problem}

\subsubsection*{Overview of the Approach}
In order to solve Problem \ref{prob:Syntesis} we rely on GP regression  and Assumption \ref{assump:smoothnes} to find a function $\gest_u$ such that with high probability, $|\gest_u(x) - g_u(x)|\leq \distBound_u$ for all $x \in \FullSet$ and a given $\distBound_u>0$. We then use $\gest_u$ to build an abstraction of Process \eqref{prob:Syntesis} in terms of a finite Markov model, where the stochastic nature of Process \eqref{eq:system}, the error in employing $\gest_u$ instead of $g_u$, and the error corresponding to the discretization of space are all formally modelled as uncertainty. We then synthesize an optimal strategy for the resulting Markov model that maximizes the probability that the paths of the Markov model satisfy $\Spec$ and is robust against the uncertainties. Finally, we derive upper and lower bounds on the probability that Process \eqref{eq:system} satisfies $\Spec$ under this strategy.

\section{preliminaries}

\subsection{Gaussian Process Regression}
\label{sec:PrelimGP}
{Gaussian Process} (GP) regression is a non-parametric Bayesian machine learning method \cite{rasmussen2003gaussian}. For an unknown function  $\mathrm{g}:\mathbb{R}^n\to \mathbb{R}$, the basic assumption of GP regression is that $\mathrm{g}$ is a sample from a GP with covariance $\kernel: \mathbb{R}^n \times \mathbb{R}^n \to \mathbb{R}_{>0} $. 
Consider a dataset of samples $\dataset=\{(\mathrm{x}_i,\mathrm{y}_i),i\in \{1,\dots,m \}\}$, where 
$\mathrm{y}_i$ is a sample of an observation of $\mathrm{g}(\mathrm{x}_i)$ with independent zero-mean noise ${v}$, which is assumed to be normally distributed with variance $\sigma^2$. 
Let $\mathrm{X}$ and $\mathrm{Y}$ be ordered vectors with all points in $\dataset$ such that $\mathrm{X}_i = \mathrm{x}_i$ and $\mathrm{Y}_i = \mathrm{y}_i$.  Further, call $K(\mathrm{X},\mathrm{X})$ the matrix with $K_{i,j}(\mathrm{X}_i,\mathrm{X}_j)=\kernel(\mathrm{x}_i,\mathrm{x}_j)$, $K(\mathrm{x},\mathrm{X})$ the vector such that $K_{i}(\mathrm{x},\mathrm{X})=\kernel(\mathrm{x},\mathrm{X}_i)$, and $K(\mathrm{X},\mathrm{x})$ defined accordingly. Then, the predictive distribution of $\mathrm{g}$ at a test point $\mathrm{x}$ is given by the conditional distribution of $\mathrm{g}$ 
given $\dataset$, which is Gaussian and with mean $\mu_\dataset$ and variance $\sigma_\dataset^2$ given by
\begin{align}
    \begin{split}\label{eq:post-mean}
      & \mu_{\dataset}(\mathrm{x}) = K(\mathrm{x},\mathrm{X}) \big( K(\mathrm{X},\mathrm{X})+ \sigma^2I_{m} \big)^{-1} Y   \\
    \end{split}\\
    \begin{split}\label{eq:post-kernel}
      & \sigma_{\dataset}^2(\mathrm{x}) = \kernel(\mathrm{x},\mathrm{x})- K(\mathrm{x},\mathrm{X})\big( K(\mathrm{X},\mathrm{X})+ \sigma^2 I_{m} \big)^{-1}K(\mathrm{X},\mathrm{x}),
     \end{split}
\end{align}
where $I_{m}$ is the identity matrix of size $m \times m$.

\subsection{Interval Markov Decision Processes}
We use a generalization of Markov decision processes to abstract the system. An \textit{interval Markov decision process} (\imdp), also called bounded-parameter Markov decision process, uses interval-valued transition probabilities \cite{givan2000bounded,Hahn:TOMACS:2019}.

\begin{definition}[{\imdp}] \label{def:imdp}
    An interval Markov decision process ({\imdp}) is a tuple $\I = (\Qimdp,\Aimdp,\Plow,\Pup,\APs,L)$, where
    \begin{itemize}
    	\setlength\itemsep{1mm}
    	\item $\Qimdp$ is a finite set of states,
    	\item $\Aimdp$ is a finite set of actions, and $\Aimdp(\qimdp)$ denotes the set of available actions at state $\qimdp \in \Qimdp$.
        \item $\Plow: \Qimdp \times \Aimdp \times \Qimdp \rightarrow [0,1]$ is a function, where $\Plow(\qimdp,a,\qimdpprime)$ defines the lower bound of the transition probability from state $\qimdp \in \Qimdp$ to state $\qimdpprime \in \Qimdp$ under action $a \in \Aimdp(\qimdp)$,
        \item $\Pup: \Qimdp \times \Aimdp \times \Qimdp \rightarrow [0,1]$ is a function, where $\Pup(\qimdp,a,\qimdpprime)$ defines the upper bound of the transition probability from state $\qimdp \in \Qimdp$ to state $\qimdpprime \in \Qimdp$ under action $a \in \Aimdp(\qimdp)$,
        \item $\APs$ is a finite set of atomic propositions,
        \item $L: \Qimdp \rightarrow 2^{\APs}$ is a labeling function that assigns to each state $\qimdp \in \Qimdp$ a subset of $\APs$.
    \end{itemize}
\end{definition}
For all $\qimdp,\qimdpprime \in \Qimdp$ and $a \in \Aimdp(\qimdp)$, it holds that $\Plow(\qimdp,a,\qimdpprime) \leq \Pup(\qimdp,a,\qimdpprime)$ and $
    \sum_{\qimdpprime \in \Qimdp} \Plow(\qimdp,a,\qimdpprime) \leq 1 \leq \sum_{\qimdpprime \in \Qimdp} \Pup(\qimdp,a,\qimdpprime)$.
 
A path of an \imdp is a sequence of states $\pathimdp = \qmdp_0 \xrightarrow{\amdp_0} \qmdp_1 \xrightarrow{\amdp_1} \qmdp_2 \xrightarrow{\amdp_2}  \ldots$ such that $\amdp_k \in \Amdp(\qmdp_k)$ and $\Pup(\qmdp_k, \amdp_k, \allowbreak{\qimdp_{k+1}}) > 0$ for all $k \in \naturals$. We denote 
the last state of a finite path $\pathimdpfin$ by $\last(\pathimdpfin)$ and the set of all finite and infinite paths by $\Pathimdpfin$ and $\Pathimdp$, respectively. The actions are chosen according to a strategy $\policy$ which is defined below. 
\begin{definition}[Strategy]
\label{def:strategy}
    A strategy $\policy$ of an \imdp model $\I$ is
    a function $\policy: \Pathimdpfin \rightarrow \Aimdp$ that maps a finite path $\pathimdpfin$ of $\I$ onto an action in $\Aimdp(\last(\Pathimdpfin))$. If a strategy depends only on $\last(\pathimdpfin)$, it is called a memoryless (stationary) strategy.  
    The set of all strategies is denoted by $\Setofpolicies$.
\end{definition}

Given an arbitrary strategy $\policy$, we are restricted to the set of interval Markov chains defined by the transition probability intervals induced by $\policy$. In order to reduce this to a Markov chain, we define the adversary function, which assigns a transition probability distribution at each state.
\begin{definition}[Adversary]
\label{def:adversary}
    For an \imdp $\I$, an adversary is a function $\adversary: \Pathimdpfin \times \Aimdp \rightarrow \probDist(\Qimdp)$ that, for each finite path $\pathimdpfin \in \Pathimdpfin$, state $\qimdp=\last(\pathimdpfin)$, and action $a \in \Aimdp(\last(\pathimdpfin))$, assigns a feasible distribution $\feasibleDist{\qimdp}{a}$ which satisfies
    \begin{equation*}
    \Plow(\qimdp,a,\qimdpprime) \leq \feasibleDist{\qimdp}{a}(\qimdpprime) \leq \Pup(\qimdp,a,\qimdpprime).
    \end{equation*}
    The set of all adversaries is denoted by $\Adversary$.
\end{definition}

Under a strategy and a valid adversary, the \imdp collapses to a Markov chain and induces a probability measure on its paths. 
We use this measure as an optimization objective for synthesizing a desirable strategy.

\section{IMDP Abstraction}
\label{sec:IMDP}
In order to solve Problem \ref{prob:Syntesis}, we start by abstracting Process \eqref{eq:system} into an \imdp $\I = (\Qimdp,\Aimdp,\Plow,\Pup,\APs,L)$. Below we describe how we obtain $\Qimdp,\Aimdp, \APs,$ and $L$. Then, in Section \ref{Sec:transitionBOunds} we consider upper and lower bounds for the transition probabilities.

\subsection{States and Actions}
The set of states $\Qimdp$ of $\I$ is obtained by discretizating the compact set $\FullSet$.  This discretization needs to respect the set of regions of interest $\ROISet=\{\roi_1,...,\roi_{|\ROISet|} \}$.
To achieve this, we first construct a set of non-overlapping regions of interests $\ROISet'$ from $\ROISet$ such that 
\begin{equation*}
    \cup_{\roi' \in \ROISet'} \, \roi' = \ROISet
    \quad \text{and} \quad \roi'_i \cap \roi'_j = \emptyset \;\;\; \forall \roi'_i,\roi'_j \in \ROISet' \text{ and } \roi'_i \neq \roi'_j.
\end{equation*}
Then, we partition each $\roi'_i$ into a set of cells (regions) that are non-overlapping.  Next, we partition the remainder of the compact set ($\FullSet \setminus \ROISet$) to a set of non-overlapping cells.
Let $\Qimdp_\safe =\{ \qimdp_1,...,\qimdp_{|\Qimdp_\safe|} \}$ denote the resulting set of all cells (include $\ROISet'$).  
Then, by construction, it holds that
\begin{equation*}
	\cup_{\qimdp \in \Qimdp_\safe} \qimdp = \FullSet, \quad \text{and} \quad \qimdp \cap \qimdp' = \emptyset \;\;\; \forall \qimdp,\qimdp' \in \Qimdp_\safe \text{ and } \qimdp \neq \qimdp'.
\end{equation*}
Each region is associated to a state of \imdp $\I$.  With an abuse of notation, $\qimdp$ denotes both the region, i.e., $\qimdp \subset \FullSet$, as well as its corresponding \imdp state, i.e, $\qimdp \in \Qimdp$.  From the context, the correct interpretation of $\qimdp$ should be clear.  Furthermore, let $\qunsafe$ denote the remainder of the state space, i.e., $\reals^n \setminus \FullSet$.  Then, the set of states of $\I$ is defined as
\begin{equation*}
	\Qimdp = \Qimdp_\safe \cup \set{\qimdp_\unsafe}.
\end{equation*}
The set of actions $\Aimdp$ of $\I$ is given by the set of modes $U$, i.e., $\Aimdp = U$, and all actions are available at each state of $\I$, i.e., $\Aimdp(\qimdp) = \Aimdp$ for all $\qimdp \in \Qimdp$.

The set of atomic propositions $\APs$ is the same as those defined over $X$. With an abuse of notation, we define the \imdp labeling function $L:\Qimdp\to 2^{\APs}$ with $L(q) = L(x)$ for any choice of $x \in \qimdp$. Note that, because the discretization respects the regions of interests, the labels of the points in a discrete cell are necessarily the same, i.e., $L(x) = L(x')$ for all $x,x' \in \qimdp$.

\subsection{Transition Probability Bounds}
\label{Sec:transitionBOunds}
In order to compute the transition probability bounds $\Plow$ and $\Pup$ for all $\qimdp,\qimdp' \in \Qimdp$ and $\aimdp \in \Aimdp = U,$ we need to derive the following bounds:
\begin{align}
	\label{eqn:CriteriaTransitionLowerBound}
        \Plow(\qimdp,\aimdp,\qimdp') & \leq \min_{x\in \qimdp}  T^{\aimdp}(\qimdp'\mid x),  \\
        \Pup(\qimdp,\aimdp,\qimdp') & \geq \max_{x\in \qimdp}  T^{\aimdp}(\qimdp'\mid x).
\end{align}
However, without any knowledge about $g_u$ in Process \eqref{eq:system}, the computation of such quantities is infeasible. In what follows we show how we can employ the data in $\dataset$ and GP regression to compute a function $\gest_u$ such that for any $x\in \FullSet,$ $g_u(x)$ and $\gest_u(x)$ are provably close.

\subsubsection{Regression Approach}
In our setting, data in $\dataset$ are samples $(x,u,x^+)$ of Process \eqref{eq:system} such that
$$ x^+=f_u(x)+g_u(x)+v,  $$
where both $x$ and $u$ are known and $v$ is a sample from the noise process $\noise$. From this we can obtain a measurement of $g_u$ by simply noticing that $f_{u}$ is known, i.e., we obtain a dataset composed by:
\begin{align}
    \label{Eqn:NewDataset}
    y^+=x^+-f_u(x)=g_u(x)+v,
\end{align}
where $y^+,x^+,x, u$ are all known. Note that, in our setting,  we make no assumptions on the fact that $g_u$ is a sample from a given GP. Furthermore, the noise $\noise_k$ is not necessarily Gaussian for any $k\in\mathbb{N}$. As a result, the assumptions for GP regression discussed in Section \ref{sec:PrelimGP} are not satisfied and we cannot directly use its prediction to make probabilistic statements over $g_u$. 
Nevertheless, thanks to Assumption~\ref{assump:smoothnes} we can rely on the properties of the RKHS space generated by $\kernel$ to bound the regression error even in our more agnostic setting.

In particular, for each $g_u:\reals^n\to\reals^n$, we use $n$ independent GPs to learn $g_{u}^{(i)}$, the $i$-th component of $g_u$. 
Then, for a given mode $u$, we consider $\gest_u^{(i)}=\mu_{\dataset},$ where $\mu_{\dataset}$ is the posterior mean of the GP as described in \eqref{eq:post-mean}.
We use the following Lemma from \cite{chowdhury2017kernelized} to characterize the error in employing $\gest_u$ instead of $g_u$.
\begin{lemma}[\hspace{1sp}\cite{chowdhury2017kernelized}, Theorem 2]\label{th:RKHS}
    Let $\FullSet$ be a compact set, $\delta\in(0,1)$, $\InfoGainBound$ the maximum information gain parameter associated with $\kernel$ and dataset $\dataset$ of $m$ training points, and $B_i > 0$ such that $\|g_u^{(i)} \|_{\kernel}\leq B_i$. Assume that $\mathrm{\noise}$ is $\subgparam$-sub-Gaussian and $\mean_\dataset$ and $\sigma_\dataset$ are found by setting $\sigma=1+2/m$. Define $\beta=(\subgparam/\sqrt{\sigma}) \big( B_i + \subgparam\sqrt{2(\InfoGainBound + 1 + \log{1/\delta})} \big)$. Then, it holds that 
    \begin{equation}\label{eq:proberror}
        P\big[\forall x \in \FullSet, \; |\mean_D(x) - g_u^{(i)}(x)| < \beta\sigma_D(x) \big]\geq 1-\delta.
    \end{equation}
\end{lemma}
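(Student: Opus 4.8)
The plan is to reconstruct the self-normalized concentration argument underlying \cite{chowdhury2017kernelized}. The starting point is to leave the reproducing-kernel viewpoint for a (possibly infinite-dimensional) feature representation: let $\phi(\cdot)$ be the canonical feature map with $\kernel(x,x')=\langle\phi(x),\phi(x')\rangle_\RKHS$, so that by Assumption~\ref{assump:smoothnes} we may write $g_u^{(i)}(x)=\langle\phi(x),\vartheta_*\rangle_\RKHS$ for some $\vartheta_*\in\RKHS$ with $\|\vartheta_*\|_\RKHS=\|g_u^{(i)}\|_\kernel\le B_i$, while each relabeled observation from \eqref{Eqn:NewDataset} satisfies $y_j=\langle\phi(x_j),\vartheta_*\rangle_\RKHS+v_j$ with $v_j$ being $\subgparam$-sub-Gaussian. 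The first step is to re-express the posterior quantities \eqref{eq:post-mean}--\eqref{eq:post-kernel} in this feature geometry: with the regularized covariance operator $V=\sigma I+\sum_{j}\phi(x_j)\phi(x_j)^\top$ (here $\sigma$ is the parameter set to $1+2/m$) one has $\mu_D(x)=\langle\phi(x),\hat\vartheta\rangle$ for the ridge estimate $\hat\vartheta=V^{-1}\sum_j\phi(x_j)y_j$, together with the identity $\sigma_D^2(x)=\sigma\,\|\phi(x)\|_{V^{-1}}^2$, i.e.\ $\|\phi(x)\|_{V^{-1}}=\sigma_D(x)/\sqrt{\sigma}$. This last identity is what ultimately produces the $\subgparam/\sqrt{\sigma}$ prefactor in $\beta$.

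With this in place, I would decompose the estimation error as $g_u^{(i)}(x)-\mu_D(x)=\langle\phi(x),\vartheta_*-\hat\vartheta\rangle$ and apply Cauchy--Schwarz in the $V$-weighted inner product, obtaining $|g_u^{(i)}(x)-\mu_D(x)|\le\|\phi(x)\|_{V^{-1}}\,\|\vartheta_*-\hat\vartheta\|_{V}$. Substituting $\|\phi(x)\|_{V^{-1}}=\sigma_D(x)/\sqrt{\sigma}$ isolates exactly the desired $\sigma_D(x)$-scaling and reduces the whole statement to bounding the single scalar $\|\vartheta_*-\hat\vartheta\|_{V}$. Splitting $\hat\vartheta$ into its regularization (bias) part and its noise part yields the standard two-term bound $\|\vartheta_*-\hat\vartheta\|_V\le\sqrt{\sigma}\,\|\vartheta_*\|_\RKHS+\|S\|_{V^{-1}}$, where $S=\sum_j\phi(x_j)v_j$ is the accumulated noise vector; the first term is controlled by $\sqrt{\sigma}\,B_i$, which after dividing by $\sqrt{\sigma}$ matches the $B_i$ appearing inside $\beta$.

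The crux, and the step I expect to be the main obstacle, is the self-normalized tail bound on $\|S\|_{V^{-1}}$. Here $S$ is a vector-valued martingale (in the filtration generated by the sampling process and the noise) taking values in an infinite-dimensional Hilbert space, so the finite-dimensional self-normalized inequality of Abbasi-Yadkori et al.\ does not apply off the shelf. I would follow the method-of-mixtures / pseudo-maximization route: introduce a Gaussian mixing measure over the deviation direction, verify that the associated exponential process is a supermartingale using $\subgparam$-sub-Gaussianity, and integrate it out to obtain, with probability at least $1-\delta$, a bound of the form $\|S\|_{V^{-1}}\le\subgparam\sqrt{2\log\!\big(\det(I+\sigma^{-1}K(\mathrm{X},\mathrm{X}))^{1/2}/\delta\big)}$. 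The determinant term is then replaced by the information-gain quantity via $\tfrac12\log\det(I+\sigma^{-1}K(\mathrm{X},\mathrm{X}))\le\InfoGainBound$, which holds by definition of the maximum information gain as a maximum over datasets of size $m$.

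Combining these estimates, and invoking the specific regularization choice $\sigma=1+2/m$ (which is precisely what cleans up the determinant bound into the ``$+1$'' term inside the square root), gives $\|S\|_{V^{-1}}\le\subgparam\sqrt{2(\InfoGainBound+1+\log(1/\delta))}$. Collecting the bias and noise contributions to $\|\vartheta_*-\hat\vartheta\|_V$, multiplying by $\|\phi(x)\|_{V^{-1}}=\sigma_D(x)/\sqrt{\sigma}$, and factoring out $\subgparam/\sqrt{\sigma}$ reproduces the stated constant $\beta$ and the inequality $|\mu_D(x)-g_u^{(i)}(x)|<\beta\,\sigma_D(x)$. I would finally stress that the uniform quantifier ``$\forall x\in\FullSet$'' costs nothing extra: all the randomness is confined to $S$, whereas $\sigma_D(x)$ and $\|\phi(x)\|_{V^{-1}}$ are deterministic once the (fixed) sample locations are given, so the single high-probability event controlling $\|S\|_{V^{-1}}$ validates the bound simultaneously at every test point.
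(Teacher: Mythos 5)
First, a point of reference: the paper offers no proof of this lemma at all---it is imported verbatim from \cite{chowdhury2017kernelized} (Theorem 2)---so your attempt can only be compared with the proof in that source. Your outline does follow that source's strategy faithfully: the feature-space identities $\mean_\dataset(x)=\langle\phi(x),\hat\vartheta\rangle$ and $\|\phi(x)\|_{V^{-1}}=\sigma_\dataset(x)/\sqrt{\sigma}$, the Cauchy--Schwarz reduction, the bias/noise split $\|\vartheta_*-\hat\vartheta\|_V\le\sqrt{\sigma}\,B_i+\|S\|_{V^{-1}}$, and the conversion of the log-determinant into $\InfoGainBound$ are all correct and form exactly the skeleton of the cited argument. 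But two genuine gaps remain. The first is the one you flag yourself: the self-normalized bound on $\|S\|_{V^{-1}}$ is not a technical afterthought but the entire content of the cited result, and the fix you propose---``introduce a Gaussian mixing measure over the deviation direction''---cannot be executed literally, because no isotropic Gaussian measure exists on an infinite-dimensional Hilbert space. Chowdhury and Gopalan's Theorem~1 circumvents this with a \emph{double-mixture} argument phrased on kernel matrices (equivalently, mixing over the law of a Gaussian process on the observed inputs rather than over RKHS elements); some such device is indispensable, and your sketch does not supply it.

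The second gap is algebraic and concerns the statement you are actually asked to prove. Carried out correctly, your estimates give
\begin{equation*}
|\mean_\dataset(x)-g_u^{(i)}(x)|\;\le\;\Big(B_i+\frac{\subgparam}{\sqrt{\sigma}}\sqrt{2\big(\InfoGainBound+1+\log(1/\delta)\big)}\Big)\,\sigma_\dataset(x),
\end{equation*}
i.e., the constant of \cite{chowdhury2017kernelized}, in which $\subgparam$ multiplies only the martingale term and $B_i$ enters with coefficient $1$. No ``factoring out of $\subgparam/\sqrt{\sigma}$'' turns this into the paper's $\beta=(\subgparam/\sqrt{\sigma})\big(B_i+\subgparam\sqrt{2(\InfoGainBound+1+\log(1/\delta))}\big)$: the two expressions are different functions of $\subgparam$, and they diverge qualitatively as $\subgparam\to 0$, where the regularization bias $B_i\,\sigma_\dataset(x)$ persists while the paper's $\beta\,\sigma_\dataset(x)$ vanishes---so no correct proof can terminate at the paper's constant for all $\subgparam$. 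In short, your route, if completed, proves the original theorem (arguably the ``right'' statement), but the final step claiming to recover the stated $\beta$ is invalid algebra; the mismatch is a defect of the paper's transcription of the cited theorem, and it should be flagged rather than glossed over.
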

One challenge in employing Lemma~\ref{th:RKHS} is in determining the values (or bounds) for the information gain constant $\InfoGainBound$ and the RKHS constant $B_i$. A procedure for obtaining  $\InfoGainBound$ is given in~\cite{srinivas2012information}. The RKHS constant $B_i$ is instead intimately related to the continuity of $g_u$, as shown in Theorem 3.11 of~\cite{paulsen2016introduction}, where a bound of $B_i$ in terms of the maximum value that $g_u$ obtains in $\FullSet$ and the kernel $\kernel$ is given.

\subsubsection{Transitions within $\Qimdp_{\safe}$}
For all states $\qimdp,\qimdp' \in \Qimdp_\safe$, the transition probability bounds in  \eqref{eqn:CriteriaTransitionLowerBound}  are given by Theorem \ref{thm:transition} below. In order to state this result, we first need to introduce the notions of expansion and reduction of a closed set.

\begin{definition}[Expansion and Reduction of a Set]
    Given a compact set $q\subset\reals^{n}$ and a set of $n$ scalars $c = \{c_1,\dots,c_n\}$, where $c_i \geq 0$, the expansion of $q$ by $c$ is defined as
    \begin{equation*}
        \overline{q}(c) = \{x\in\reals^n \mid \exists x_q \in q \;\; s.t. \;\; |x_q^{(i)}-x^{(i)}|\leq c_i \;\; \forall i=\{1,\dots,n\}\},
    \end{equation*}
    and the reduction of $q$ by $c$ is 
    \begin{equation*}
        \underline{q}(c) = \{x_q \in q \mid  \forall x_{\partial q} \in \partial q, \;\; |x_q^{(i)}-x_{\partial q}^{(i)}|> c_i \;\; \forall i=\{1,\dots,n\}\},
    \end{equation*}
    where $\partial q$ denotes the boundary of $q$.
\end{definition}
In addition, we define the image of region $\qimdp$ under the learned dynamics by 
$$Im(\qimdp)=\{f_u(x)+\gest_u(x) \mid x\in\qimdp\}$$ 
and the intersection indicator function as
\begin{equation*}
    \interindicator{V}{W} = \begin{cases} 1 &\text{ if } V\cap W \neq \emptyset\\ 0 & \text{ otherwise }\end{cases}
\end{equation*}
for arbitrary sets $V$ and $W$. 
We can now bound the transition probabilities between the $\imdp$ states in $\Qimdp_\safe$. 

\begin{theorem}\label{thm:transition}
    Let $\|h\|^\qimdp_{\infty}\equiv \sup_{x\in\qimdp} |h(x)|$.
    Given an action (mode) $\aimdp\in\Aimdp$, regions $\qimdp,\qimdp' \in \Qimdp_\safe$, dataset $\dataset$, regression $\gest_u$, and 
    positive real vectors $\distBound\in\reals^n$ and $\eta\in\reals^n$, it holds that

    \begin{align}\label{eq:upper-bound}
        \begin{split}
            \max_{x\in \qimdp} &\; T^{\aimdp}(\qimdp' \mid x) \\ 
            &\leq \big(\interindicator{\overline {\qimdp}' (\distBound+\eta)}{ Im(\qimdp)} \prod_{i=1}^n P[|v^{(i)}|\leq \eta_i]+\prod_{i=1}^n P[|v^{(i)}|> \eta_i]\big)~\cdot \\ &\prod_{i=1}^n P[\|g_u^{(i)}-\gest_u^{(i)}\|^{\qimdp}_\infty \leq \distBound_i]+\prod_{i=1}^n(1-P[\|g_u^{(i)}-\gest_u^{(i)}\|^{\qimdp}_\infty \leq \distBound_i]),
        \end{split}
    \end{align}
    \begin{align}\label{eq:lower-bound}
        \begin{split}
            \min_{x\in \qimdp} &\; T^{\aimdp}(\qimdp'\mid x)\geq  \\&
            \big(1-\interindicator{\FullSet \setminus \underline {\qimdp}'(\distBound+\eta)}{Im(\qimdp)}\big) \prod_{i=1}^n P[\|g_u^{(i)}-\gest_u^{(i)}\|^{\qimdp}_\infty \leq \distBound_i]\prod_{i=1}^n P[|v^{(i)}|\leq \eta_i].
        \end{split}
    \end{align}
    \end{theorem}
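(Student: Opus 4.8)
The plan is to bound the one-step transition probability by decomposing the successor state into a known part, the learning error, and the process noise, and then controlling the latter two through the independent high-probability events supplied by Lemma~\ref{th:RKHS} and by the sub-Gaussian noise. Concretely, for $x\in\qimdp$ I would write the successor as $f_\aimdp(x)+g_\aimdp(x)+v = \bigl(f_\aimdp(x)+\gest_\aimdp(x)\bigr) + \bigl(g_\aimdp(x)-\gest_\aimdp(x)\bigr) + v$, where the first summand lies in $Im(\qimdp)$ by definition, the second is the unknown but bounded regression error, and the third is the noise. Since $T^{\aimdp}(\qimdp'\mid x)=\int \indicator_{\qimdp'}(f_\aimdp(x)+g_\aimdp(x)+\bar v)\,\distribution_v(\bar v)\,d\bar v$ is exactly the probability that this successor lands in $\qimdp'$, the statement reduces to propagating componentwise deviations of the error and the noise through the (axis-aligned) cell $\qimdp'$.

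First I would isolate the two favorable events $A_i=\{\|g_\aimdp^{(i)}-\gest_\aimdp^{(i)}\|_\infty^{\qimdp}\le\distBound_i\}$ and $B_i=\{|v^{(i)}|\le\eta_i\}$, and record the elementary geometric inclusions. On $\bigcap_{i} A_i\cap\bigcap_{i} B_i$ each coordinate of the successor differs from the corresponding coordinate of the image point $f_\aimdp(x)+\gest_\aimdp(x)$ by at most $\distBound_i+\eta_i$. This yields two one-line set facts: for the upper bound, landing in $\qimdp'$ forces the image point into the expansion $\overline{\qimdp'}(\distBound+\eta)$, so the event is empty unless $Im(\qimdp)$ meets that expansion, producing the factor $\interindicator{\overline{\qimdp'}(\distBound+\eta)}{Im(\qimdp)}$; for the lower bound, if the image point lies in the reduction $\underline{\qimdp'}(\distBound+\eta)$ then the successor is \emph{forced} into $\qimdp'$, producing the corresponding indicator.

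The combination step exploits independence: because the $n$ GPs are learned independently and the noise components are independent and stationary, $P[\bigcap_i A_i]=\prod_i P[A_i]$ and $P[\bigcap_i B_i]=\prod_i P[B_i]$, and the error family (a function of the training data) is independent of the current-step noise. For the upper bound I would apply the law of total probability, splitting on whether the RKHS error bounds hold: on the good event the indicator controls the probability and contributes $\interindicator{\overline{\qimdp'}(\distBound+\eta)}{Im(\qimdp)}\prod_i P[A_i]\prod_i P[B_i]$, while the complementary event, where the regression bound fails, is estimated crudely and contributes the residual $\prod_i(1-P[A_i])$. For the lower bound I would intersect the two favorable events and invoke the reduction inclusion, so that $T^{\aimdp}(\qimdp'\mid x)\ge \prod_i P[A_i]\prod_i P[B_i]$ whenever the image meets the relevant set. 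Finally I would pass to $\max_x$ and $\min_x$ by observing that these inequalities depend on $x$ only through membership of $f_\aimdp(x)+\gest_\aimdp(x)$ in $Im(\qimdp)$, so the pointwise estimates lift uniformly to the set-level intersection form.

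The hard part will be the bookkeeping of the two independent uncertainty sources together with the $\max/\min$ over the uncountable set $\qimdp$. The error bound of Lemma~\ref{th:RKHS} is uniform over $\FullSet$, so it can legitimately be pulled outside the noise integral; but one must verify that the indicator condition, which is a statement about the \emph{entire} image set $Im(\qimdp)$, dominates (for the max) and is dominated by (for the min) the pointwise transition probability for \emph{every} $x\in\qimdp$ simultaneously. Translating the pointwise containments into the set-level intersection indicators, and deciding whether the residual failure probability is absorbed per coordinate or jointly, is exactly where the estimate is easy to render either too loose to be useful or too tight to remain valid; making the slack term $\prod_i(1-P[A_i])$ both correct and non-vacuous is the crux of the argument.
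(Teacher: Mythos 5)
Your proposal follows essentially the same route as the paper's proof: a law-of-total-probability split on the regression-error events $A_i=\{\|g_u^{(i)}-\gest_u^{(i)}\|^{\qimdp}_\infty\le\distBound_i\}$, a crude bound of one on the failure branch, componentwise expansion/reduction geometry that turns the pointwise containments into the set-level intersection indicators via $Im(\qimdp)$, and independence of the per-coordinate regressions and of the fresh noise to factor the probabilities. The subtlety you single out as the crux — whether the failure mass enters as $\prod_i(1-P[A_i])$ or as $1-\prod_i P[A_i]$ — is left equally unexamined in the paper, whose split treats $\bigcap_i A_i$ and $\bigcap_i A_i^c$ as if they were complementary (true only when $n=1$), so your reconstruction is faithful to the published argument, including that loose end.
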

    \begin{proof}[Proof] 
\noindent
\newcommand{\pshorthand}{P[\pathX(1) \in\qimdp'\mid x,\aimdp]}
\newcommand{\pshorthandgest}{P[\pathX(1) \in\qimdp'\mid x,\aimdp, \|g_u-\gest_u\| \leq \distBound]}
\newcommand{\pshorthandgestcomp}{P[\pathX(1) \in\qimdp'\mid x,\aimdp, \|g_u-\gest_u\| > \distBound]}
Let $\|g_u-\gest_u\| \leq \distBound$ denote the event $\|g_u^{(i)}-\gest_u^{(i)}\| \leq \distBound_i$ for $i=1,\dots,n$ (and similar for the complementary event). Define 
$$\pshorthand \coloneqq P[\pathX(1) \in \qimdp' | \pathX(0) = x \in \qimdp, \aimdp].$$ 
Then using the law of total probability
\begin{align*}
    &\pshorthand =\\
    &\pshorthandgest \prod_{i=1}^n P[\|g_u^{(i)}-\gest_u^{(i)}\|^{\qimdp}_\infty \leq  \distBound_i] + \\
    &\pshorthandgestcomp\prod_{i=1}^n P[\|g_u^{(i)}-\gest_u^{(i)}\|^{\qimdp}_\infty > \distBound_i]
\end{align*}
The transition kernel can be upper bounded by
\begin{align*}
    &\max_{x\in \qimdp} T^{\aimdp}(\qimdp' \mid x) = \max_{x\in \qimdp} \pshorthand \leq\\
    &\max_{x\in \qimdp} \pshorthandgest\prod_{i=1}^n P[\|g_u^{(i)}-\gest_u^{(i)}\|^{\qimdp}_\infty \leq  \distBound_i] \\
    & \hspace{35mm} + 1\cdot  \prod_{i=1}^n(1-P[\|g_u^{(i)}-\gest_u^{(i)}\|^{\qimdp}_\infty \leq  \distBound_i]).
\end{align*}
To account for the uncertainty in the regression and process noise, we again use the law of total probability and expand $\qimdp'$ by $\distBound$ and $\eta$ and check for an intersection between $Im(\qimdp)$ and $\overline {\qimdp}' (\distBound+\eta)$:
\begin{align*}
    &\leq \big(\interindicator{\overline {\qimdp}' (\distBound+\eta)}{ Im(\qimdp)} \prod_{i=1}^n P[|v^{(i)}|\leq \eta_i] + 1\cdot\prod_{i=1}^n P[|v^{(i)}|> \eta_i]\big)~\cdot \\  & \prod_{i=1}^nP[\|g_u^{(i)}-\gest_u^{(i)}\|^{\qimdp}_\infty \leq  \distBound_i] +
     \prod_{i=1}^n(1- P[\|g_u^{(i)}-\gest_u^{(i)}\|^{\qimdp}_\infty \leq  \distBound_i]).
\end{align*}
Similarly, the transition kernel can be lower bounded by determining if any points lie outside of the intersection of $Im(\qimdp)$ and $\underline {\qimdp}'(\distBound+\eta)$:
\begin{align*}
&\min_{x\in \qimdp} T^{\aimdp}(\qimdp'\mid x) = \min_{x\in \qimdp} \pshorthand\\
    &\geq\Big(\big(1-\interindicator{\FullSet \setminus \underline {\qimdp}(\distBound+\eta)}{Im(\qimdp)}\big) \prod_{i=1}^n P([v^{(i)}|\leq \eta_i] + 0\cdot\prod_{i=1}^n P[|v^{(i)}|> \eta_i]\Big)~\cdot \\  &
    \prod_{i=1}^n P[\|g_u^{(i)}-\gest_u^{(i)}\|^{\qimdp}_\infty \leq  \distBound_i] +  0\cdot \prod_{i=1}^n P[\|g_u^{(i)}-\gest_u^{(i)}\|^{\qimdp}_\infty >  \distBound_i]\\ 
    &=\big(1-\interindicator{\FullSet \setminus \underline {\qimdp}'(\distBound+\eta)}{Im(\qimdp)}\big) \prod_{i=1}^n P[|v^{(i)}|\leq \eta_i]  \prod_{i=1}^n P[\|g_u^{(i)}-\gest_u^{(i)}\|^{\qimdp}_\infty \leq  \distBound_i].
\end{align*}
\end{proof}
Theorem \ref{thm:transition} computes formal bounds for the transition probabilities by using the law of total probability with respect to the events $|v^{(i)}|\leq \eta_i$ (noise is bounded by $\eta_i$), $\|g_u^{(i)}-\gest_u^{(i)}\|^{\qimdp}_\infty \leq \distBound_i$ (the supremum of the regression error is bounded by $\distBound_i$ for all $x\in \qimdp$), and their complementary events. 
In particular, a bound on the probability of the latter event can be obtained by Lemma \ref{th:RKHS}, while the probability of former depends on the known distribution of the noise $\distribution_{v}$. 

In order to get non-trivial transition bounds, constants $\eta$ and $\distBound$ should be selected to minimize or maximize  the bounds in \eqref{eq:upper-bound} and \eqref{eq:lower-bound}  respectively. In particular, we pick $\eta$ as the smallest constants such that the noise is bounded by $\eta$ with high probability, e.g., $0.99$. Then, for this $\eta$, 
our procedure to choose a value for $\distBound$ is as follows.
We first check if $Im(\qimdp) \subset \qimdp'$. If it is the case, we pick $\epsilon$ as the greatest constants such  that $Im(\qimdp)\subset \underline {\qimdp}'(\distBound+\eta)$. Otherwise, we simply select $\epsilon$ as the smallest constants such that $\|g_u^{(i)}-\gest_u^{(i)}\|^{\qimdp}_\infty\leq \distBound$ with high probability, e.g., that satisfies the bound in Lemma \ref{th:RKHS} with probability $0.99$. 

Note that for $\qimdp\subset \FullSet$ the above procedure requires one to compute $Im(\qimdp)$. This is equivalent to computing the maximum and minimum of \eqref{eq:post-mean} for all $x\in \qimdp$. Arbitrarily tight bounds for these quantities can be computed by utilizing the convexity of most used kernels, such as the the squared-exponential function, as outlined in~\cite{cardelli2019robustness,blaas2020adversarial}. With a similar approach, a bound for $\max_{x\in\qimdp}\sigma_D(x)$ can also be computed, as this is required for the computation of Lemma \ref{th:RKHS}.

\subsubsection{Transitions to $\qunsafe$}

The probability interval for transitioning to the state $\qunsafe \in \Qimdp$, i.e., the region outside of $\FullSet$, is given by
\begin{align*}
    \Plow(\qimdp, \aimdp, \qunsafe) = 1 - \max_{x\in \qimdp} T^{\aimdp}(\FullSet \mid x),\\
    \Pup(\qimdp, \aimdp, \qunsafe) = 1 - \min_{x\in \qimdp} T^{\aimdp}(\FullSet \mid x).
\end{align*}
These bounds can be calculated as a corollary of Theorem \ref{thm:transition}. 

\begin{corollary}
Let $q \in Q_\safe$, then for any $\distBound, \eta>0$ it holds that
\begin{align*}
    &\Plow(\qimdp,\aimdp,\qunsafe) \\
    &\geq 1- \big (\interindicator{\overline {\FullSet} (\distBound+\eta)}{ Im(\qimdp)} \prod_{i=1}^n P[|v^{(i)}|\leq \eta_i] + \prod_{i=1}^n P[|v^{(i)}|> \eta_i]\big)~\cdot\\ &\prod_{i=1}^n P[\|g_u^{(i)}-\gest_u^{(i)}\|^{\qimdp}_\infty \leq \distBound_i]-\prod_{i=1}^n(1-P[\|g_u^{(i)}-\gest_u^{(i)}\|^{\qimdp}_\infty \leq \distBound_i]),\\
    &\Pup(\qimdp,\aimdp,\qunsafe) \\
    &\leq 1-\big(1-\interindicator{\FullSet \setminus \underline {\FullSet}(\distBound+\eta)}{Im(\qimdp)}\big ) \prod_{i=1}^n P[\|g_u^{(i)}-\gest_u^{(i)}\|^{\qimdp}_\infty \leq \distBound_i]\prod_{i=1}^n P[|v^{(i)}|\leq \eta_i].
\end{align*}
\end{corollary}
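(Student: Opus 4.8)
The plan is to derive the corollary directly from Theorem~\ref{thm:transition} by treating the entire compact set $\FullSet$ as the target region and then exploiting complementarity of the transition kernel.

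First I would observe that, for every $x$, the transition function $T^{\aimdp}(\cdot \mid x)$ is a probability measure on $\reals^n$, so that $T^{\aimdp}(\FullSet \mid x) + T^{\aimdp}(\qunsafe \mid x) = 1$ because $\qunsafe = \reals^n \setminus \FullSet$. This gives $\min_{x\in\qimdp} T^{\aimdp}(\qunsafe \mid x) = 1 - \max_{x\in\qimdp} T^{\aimdp}(\FullSet \mid x)$ and $\max_{x\in\qimdp} T^{\aimdp}(\qunsafe \mid x) = 1 - \min_{x\in\qimdp} T^{\aimdp}(\FullSet \mid x)$. Combined with the defining equalities $\Plow(\qimdp,\aimdp,\qunsafe) = 1 - \max_{x\in\qimdp} T^{\aimdp}(\FullSet \mid x)$ and $\Pup(\qimdp,\aimdp,\qunsafe) = 1 - \min_{x\in\qimdp} T^{\aimdp}(\FullSet \mid x)$, the task reduces to upper bounding $\max_{x\in\qimdp} T^{\aimdp}(\FullSet \mid x)$ and lower bounding $\min_{x\in\qimdp} T^{\aimdp}(\FullSet \mid x)$.

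Next I would apply Theorem~\ref{thm:transition} with the target region $\qimdp'$ replaced by $\FullSet$: the upper bound \eqref{eq:upper-bound} bounds $\max_{x\in\qimdp} T^{\aimdp}(\FullSet \mid x)$ from above, and the lower bound \eqref{eq:lower-bound} bounds $\min_{x\in\qimdp} T^{\aimdp}(\FullSet \mid x)$ from below. Subtracting each from $1$, which reverses the inequality direction, and using the two defining equalities immediately produces the two claimed inequalities, with $\overline{\FullSet}(\distBound+\eta)$ and $\FullSet \setminus \underline{\FullSet}(\distBound+\eta)$ appearing in the intersection indicators in place of the corresponding expansions and reductions of $\qimdp'$.

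The only point requiring care --- and the main obstacle --- is justifying that Theorem~\ref{thm:transition} remains valid when $\qimdp' = \FullSet$, since the theorem is stated only for $\qimdp' \in \Qimdp_\safe$. I would argue that its proof never uses that $\qimdp'$ is a single discretization cell: the law-of-total-probability decomposition and the expansion and reduction operators are well defined for any compact Borel set, and $\FullSet$ is compact by assumption, so every step carries over verbatim with $\FullSet$ substituted for $\qimdp'$. One should also note that the alternative route of writing $\FullSet = \cup_{\qimdp'\in\Qimdp_\safe}\qimdp'$ and summing the per-cell bounds does \emph{not} work, since $\max_{x\in\qimdp}$ does not distribute over the sum over target cells; applying the theorem directly to $\FullSet$ is precisely what avoids this difficulty.
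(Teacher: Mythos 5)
Your proposal is correct and follows exactly the route the paper intends (the paper gives no explicit proof, merely noting that the bounds follow from Theorem~\ref{thm:transition} after writing $\Plow(\qimdp,\aimdp,\qunsafe) = 1 - \max_{x\in\qimdp} T^{\aimdp}(\FullSet \mid x)$ and $\Pup(\qimdp,\aimdp,\qunsafe) = 1 - \min_{x\in\qimdp} T^{\aimdp}(\FullSet \mid x)$): complementarity of the transition kernel, application of the theorem with $\FullSet$ as the target set, and subtraction from one. Your explicit justification that Theorem~\ref{thm:transition} applies with $\qimdp' = \FullSet$ --- since its proof uses only compactness of the target set (needed for the expansion and reduction operators) and never that it is a single discretization cell --- fills in the one step the paper leaves implicit, and your observation that a cell-by-cell summation would not recover the stated bound is accurate.
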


To complete the construction of abstraction $\I$, we make $\qunsafe$ absorbing, i.e., $\Plow(\qunsafe,\aimdp,\qunsafe)= \Pup(\qunsafe,\aimdp,\qunsafe) = 1$ for all $\aimdp \in A$, to ensure that  $\I$ does not account for the transitions to $\FullSet$ from $\qunsafe$ since such paths do not satisfy $\Spec$ as defined in \eqref{eq:satisfying-path}.

%%%%%%%%%%%%%%%%%%%%%%%%%%%%%%%%%%%%%%%%%%%%%%%%%%%%%%%%%%%%%%%%
%%%%%%%%%%%%%%%%%%%%%%%%%%%%%%%%%%%%%%%%%%%%%%%%%%%%%%%%%%%%%%%%
\section{Strategy Synthesis}
\label{sec:synthesis}

Given an \LTLf formula $\Spec$, ideally we would like to synthesize an optimal switching strategy $\optpolicy_{\x}$ for Process \eqref{eq:system}, under which the probability of satisfaction of $\Spec$ by the paths of Process \eqref{eq:system} is maximized. 
However, since $g_u$ is unknown, this is generally infeasible. Instead, we employ the \imdp abstraction $\I$ as constructed above, which is a conservative model of Process \eqref{eq:system} since the transition probabilities of $\I$ include uncertainties (errors) of the learning process as well as those related to the discretization. On this model, we find a strategy that is robust against all these uncertainties and maximizes the probability of satisfying $\Spec$.  Then, we can refine this strategy to a switching strategy for Process \eqref{eq:system}. Note that the resulting strategy is not necessarily optimal for Process \eqref{eq:system}, however, in what follows we  show how the error between the resulting strategy and optimal strategy $\optpolicy_{\x}$ can be quantified. 

\subsection{Near-optimal Robust Strategy}
\label{sec:robust-near-optimal-strategy}
The uncertainties in $\I$ are characterized by adversary $\adversary$, which chooses a feasible transition probability from one \imdp state to another under a given action.  
Recall that given a strategy $\policy$ and an adversary $\adversary$, $\I$ becomes a Markov chain with a well-defined probability measure over its paths.  Then, our (robust and near-optimal strategy) objective translates to finding a strategy that maximizes the probability of satisfying $\Spec$ with the assumption that the adversary (uncertainty) attempts to minimize this probability, i.e.,
\begin{equation}
    \label{eq:strategy-near-optimal}
    \nearoptpolicy = \arg\max_{\policy \in \Setofpolicies} \;\, \min_{\adversary \in \Adversary} P[\pathimdp \models \Spec \mid \policy, \adversary, \pathimdp(0) = \qimdp],
\end{equation}
Under $\nearoptpolicy$, the lower bound and upper bound on the probability of satisfaction are then given by
\begin{align}
    \label{eq:plow}
    \plow(\qimdp) &= \min_{\adversary \in \Adversary} P[\pathimdp \models \Spec \mid \nearoptpolicy, \adversary, \pathimdp(0) = \qimdp], \\
    \pup(\qimdp) &= \max_{\adversary \in \Adversary} P[\pathimdp \models \Spec \mid \nearoptpolicy, \adversary, \pathimdp(0) = \qimdp],
    \label{eq:pup}
\end{align}
respectively.  

To correctly refine a strategy computed on $\I$ to a switching strategy for process $\x$, let $z: \reals^n \to Q$ be a function that maps each state $x$ of Process \eqref{eq:system} to its corresponding discrete region $q \in Q$, i.e., $z(x) = q$ iff $x \in q$.  We also use $z$ to denote mapping from finite paths of process $\x$ to finite paths of $\I$, i.e., for a finite path $\pathX^{k}= x_0 \xrightarrow{u_0} x_1 \xrightarrow{u_1}  \ldots \xrightarrow{u_{k-1}} x_k$,  the corresponding path on $\I$ is given by
$$z(\pathX^k) = z(x_0) \xrightarrow{u_0} z(x_1) \xrightarrow{u_1}  \ldots \xrightarrow{u_{k-1}} z(x_k).$$
Then, strategy $\nearoptpolicy$ on $\I$ is correctly refined to a switching strategy $\nearoptpolicy_\x$ for process $\x$ by
\begin{equation}
    \label{eq:mapping-strategy}
    \nearoptpolicy_\x(\pathX^k) = \nearoptpolicy(z(\pathX^k)).
\end{equation}

Note that, the maximum probability of satisfaction of $\Spec$ by Process \eqref{eq:system} is necessarily lower bounded by $\plow$ in \eqref{eq:lower-bound}, i.e., $\popt(x) \geq \plow(z(x))$, where
$$\popt(x) = \max_{\policy_\x} P[\pathX \models \Spec \mid \policy_\x, \pathX(0) = x].$$
However, $\popt(x)$ is not necessarily upper bounded by $\pup$ in \eqref{eq:pup}.  Probability $\popt(x)$ can instead be upper bounded by
\begin{equation}
    \label{eq:prob-upper-bound}
    \pup^*(x) = \max_{\policy \in \Setofpolicies} \;\, \max_{\adversary \in \Adversary} P[\pathimdp \models \Spec \mid \policy, \adversary, \pathimdp(0) = z(x)],
\end{equation}
where the adversary (uncertainty) cooperatively chooses feasible transition probabilities to maximize the probability of satisfaction of $\Spec$.  Therefore, 
$$\popt(x) \in [\plow(z(x)), \pup^*(z(x))].$$

Below, we show how the strategy in \eqref{eq:strategy-near-optimal}, its corresponding probability bounds in \eqref{eq:plow} and \eqref{eq:pup}, and probability in \eqref{eq:prob-upper-bound} can be computed.

\subsection{Synthesis}
Given an \LTLf formula $\Spec$, a deterministic finite automaton can be constructed that precisely accepts the language of $\Spec$ per \cite{de2013linear}.
\begin{definition}[\dfa]
    A \textit{deterministic finite automaton} (\dfa) constructed from an \LTLf formula $\Spec$ defined over atomic propositions $\APs$ is a tuple $\DFAA=(\DFAStates, 2^{\APs}, \DFATransition, \DFAState_0,\DFAFinalState)$, where 
    $\DFAStates$ is a finite set of states,
    $2^{\APs}$ is a finite set of input symbols, each of which is a set of atomic propositions in $\APs$,
    $\DFATransition:\DFAStates\times 2^{\APs} \to\DFAStates$ is the transition function, 
    $\DFAState_0 \in \DFAStates$ is the initial state, and 
    $\DFAFinalState \subseteq \DFAStates$ is the set of accepting (final) states. 
\end{definition} 
A finite \emph{run} on a \dfa is a sequence of states $s = \DFAState_0\DFAState_1\dots\DFAState_{n+1}$ induced by a sequence of symbols $\trace = \trace_0\trace_1\dots\trace_n$, where $\trace_i \in 2^{\APs}$ and $\DFAState_{i+1} = \DFATransition(s_i,\trace_i)$. Finite run $s$ on trace $\trace$ is accepting if $\DFAState_{n}\in\DFAFinalState$.  If $s$ is accepting, then trace $\trace$ is accepted by $\DFAA$.  The set of all traces that are accepted by $\DFAA$ is call the language of $\DFAA$, denoted by $\mathcal{L}(\DFAA)$.  This language is equal to the language of $\Spec$, i.e., $\mathcal{L}(\Spec) = \mathcal{L}(\DFAA)$. 

Next, we construct a product of \dfa $\DFAA$ with \imdp $\I$ to capture the paths of $\I$ that satisfy $\Spec$.

\begin{definition}[Product \imdp]
    Given an \imdp $\I = (\Qimdp, \Aimdp, \Plow^{\PIMDPS},$ $\Pup^{\PIMDPS}, \APs, L)$ and \dfa $\DFAA = (\DFAStates, 2^{\APs}, \DFATransition, \DFAState_0,\DFAFinalState)$, the \textit{product \imdp} (\pimdp) $\PIMDPS = \I \times \DFAA$ is itself an \imdp defined as the tuple $\PIMDPS=(Q^\PIMDPS, \Aimdp, \Plow^{\PIMDPS}, \Pup^{\PIMDPS}, Q_0^\PIMDPS, Q_F^\PIMDPS)$, where 
    $Q^\PIMDPS = \Qimdp\times\DFAStates$, $Q_F^\PIMDPS = \Qimdp\times\DFAStates_F,$
    $$Q_0^\PIMDPS = \set{(\qimdp,\DFAState_{init}) \mid  \qimdp \in \Qimdp , \; \DFAState_{init}=\DFATransition(\DFAState_0, L(\qimdp))},$$
    and
    \begin{align*}
        \Plow^{\PIMDPS}((\qimdp, \DFAState), \aimdp, (\qimdp', \DFAState')) &= \begin{cases}
            \Plow(\qimdp,\aimdp,\qimdp') & \text { if } \DFAState'=\DFATransition(\DFAState,L(\qimdp))\\
            0 & \text{ otherwise}
        \end{cases} \\
        \Pup^{\PIMDPS}((\qimdp, \DFAState), \aimdp, (\qimdp', \DFAState')) &= \begin{cases}
            \Pup(\qimdp,\aimdp,\qimdp') & \text { if } \DFAState'=\DFATransition(\DFAState,L(\qimdp))\\
            0 & \text{ otherwise}.
        \end{cases} 
    \end{align*}
\end{definition}

\noindent
In its essence, the \pimdp incorporates the historical dependencies on the runs of the \dfa and couples them with the paths of the \imdp. 
The projection of a path of $\PIMDPS$ that reaches a state in $Q^\PIMDPS_F$ onto $\DFAA$ is an accepting run, and hence, the projection of this path onto abstraction $\I$ is a path that satisfies $\Spec$.
Therefore, the synthesis problem in \eqref{eq:strategy-near-optimal} is reduced to computing a robust (pessimistic) strategy on product $\PIMDPS$ that maximizes the probability of reaching $Q^\PIMDPS_F$.  Similarly, the probability in \eqref{eq:prob-upper-bound} is given by an optimistic strategy on $\PIMDPS$ that maximizes the the probability of reaching $Q^\PIMDPS_F$.  These problems are variations of a known problem called \textit{maximal reachability probability problem} and can be solved using a method similar to value iteration called interval-value iteration \cite{wu2008reachability,lahijanian2015formal}, whose computational complexity is polynomial.  
The resulting strategies are stationary on $\PIMDPS$, which map to history dependent strategies on $\I$.

\subsection{Correctness}
The following theorem shows that $\nearoptpolicy_\x$ in \eqref{eq:mapping-strategy} is a $\opterror$-near-optimal switching strategy for Process~\eqref{eq:system} and quantifies its distance (error) $\opterror$ to the optimal switching strategy $\optpolicy_\x$.

\begin{theorem}
    \label{thm:correctness}
    Given a partially-known switched stochastic system as defined in Process \eqref{eq:system}, a dataset $D$,
    a compact set $\FullSet \subset \reals^n$, an \LTLf formula $\Spec$ defined over the regions of interest in $\FullSet$, let $\I$ be an \imdp abstraction as described in Section~\ref{sec:IMDP}, $\nearoptpolicy$ be a strategy on $\I$ given by \eqref{eq:strategy-near-optimal}, and $\nearoptpolicy_\x$ be the switching strategy for Process \eqref{eq:system} obtained from $\nearoptpolicy$ according to mapping $z$ in \eqref{eq:mapping-strategy}.  Further, let $\plow$, $\pup$, and $\pup^*$ be the probability bounds in \eqref{eq:plow}, \eqref{eq:pup}, and \eqref{eq:prob-upper-bound}, respectively.  Then, it holds that
    \begin{equation*}
        P[\pathX \models \Spec \mid \nearoptpolicy_\x, \pathX(0)=x] \in [\plow(z(x)), \pup(z(x))],
    \end{equation*}
    and 
    \begin{equation*}
        | P[\pathX \models \Spec \mid \nearoptpolicy_\x, \pathX(0)=x] - \popt(x) | \leq \pup^*(z(x)) - \plow(z(x)),
    \end{equation*}
    where $\popt(x) = \max_{\policy_\x \in \Setofpolicies_\x} P[\pathX \models \Spec \mid \policy_\x, \pathX(0)=x]$.
\end{theorem}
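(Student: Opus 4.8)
The plan is to reduce satisfaction of $\Spec$ by paths of Process~\eqref{eq:system} to a reachability objective on the product construction, and then to sandwich the resulting continuous satisfaction probability between the \imdp values $\plow$ and $\pup$ via an induction on the time horizon. I would first lift the problem to the product space: since the discretization respects the regions of interest, every $x' \in \qimdp'$ shares the label $L(\qimdp')$, so the \dfa transition $\DFATransition(\DFAState, L(x'))$ depends on $x'$ only through its cell $z(x')$. This lets me define, on the continuous product $\FullSet \times \DFAStates$, the finite-horizon value $V^N_\x(x,\DFAState) = P[\text{reach } \DFAFinalState \text{ within } N \text{ steps, staying in } \FullSet \mid \nearoptpolicy_\x, (x,\DFAState)]$, with accepting \dfa states treated as satisfying and $\qunsafe$ absorbing with value $0$; by \eqref{eq:satisfying-path} and the product construction, the limit $V^\infty_\x$ equals the left-hand satisfaction probability evaluated at the product initial state.

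The enabling fact is soundness of the abstraction: by Theorem~\ref{thm:transition} together with the defining criteria in \eqref{eqn:CriteriaTransitionLowerBound}, for every cell $\qimdp$, action $u$, and state $x \in \qimdp$ the vector $(T^{u}(\qimdp' \mid x))_{\qimdp' \in \Qimdp}$ (including the mass sent to $\qunsafe = \reals^n \setminus \FullSet$) is a genuine probability distribution lying entrywise in $[\Plow(\qimdp,u,\cdot),\, \Pup(\qimdp,u,\cdot)]$, i.e.\ it is a feasible distribution $\feasibleDist{\qimdp}{u}$ in the sense of Definition~\ref{def:adversary}. I would then prove, by induction on $N$, that $\plow^N(z(x),\DFAState) \le V^N_\x(x,\DFAState) \le \pup^N(z(x),\DFAState)$, where $\plow^N,\pup^N$ are the $N$-step pessimistic and optimistic values of $\nearoptpolicy$ on $\PIMDPS$. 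The base case is immediate. In the inductive step I would write $V^{N+1}_\x(x,\DFAState) = \int_{\FullSet} T^{u}(dx'\mid x)\, V^N_\x\big(x', \DFATransition(\DFAState,L(x'))\big)$ with $u = \nearoptpolicy(z(x),\DFAState)$ (this is where the refinement \eqref{eq:mapping-strategy} matching actions is used), bound the integrand using the inductive hypothesis to replace the $x'$-dependent value with the cell-constant bound $\pup^N(z(x'),\cdot)$, collapse the integral into the finite sum $\sum_{\qimdp'} T^{u}(\qimdp'\mid x)\,\pup^N(\qimdp',\DFATransition(\DFAState,L(\qimdp')))$, and finally invoke feasibility to dominate this by the adversary-maximizing value, obtaining $\pup^{N+1}(z(x),\DFAState)$; the lower bound is symmetric with the minimizing adversary. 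Passing $N\to\infty$ (reachability values are monotone in the horizon, so monotone convergence applies) yields $P[\pathX\models\Spec \mid \nearoptpolicy_\x, x] \in [\plow(z(x)),\pup(z(x))]$, the first claim.

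For the error bound I would establish $\popt(x)\in[\plow(z(x)),\pup^*(z(x))]$. The lower bound is immediate because $\nearoptpolicy_\x$ is itself a switching strategy, so $\popt(x) \ge P[\pathX\models\Spec\mid\nearoptpolicy_\x,x] \ge \plow(z(x))$. For the upper bound I would run the same induction on the sup-value $V^{*,N}_\x(x,\DFAState) = \sup_{\policy_\x} V^{\policy_\x,N}_\x(x,\DFAState)$, which satisfies the Bellman optimality recursion $V^{*,N+1}_\x(x,\DFAState) = \max_{u\in U}\int_\FullSet T^{u}(dx'\mid x)\, V^{*,N}_\x(x',\cdot)$; the same coarsen-then-dominate argument, now additionally maximizing over actions, bounds it by $\pup^{*,N}$, which optimizes over both actions and feasible adversaries, giving $\popt(x)\le\pup^*(z(x))$. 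Since $\pup\le\pup^*$ (the former fixes the strategy $\nearoptpolicy$ while the latter optimizes over strategies), both $P[\pathX\models\Spec\mid\nearoptpolicy_\x,x]$ and $\popt(x)$ lie in the common interval $[\plow(z(x)),\pup^*(z(x))]$, and the desired bound follows as the width of this interval.

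I expect the main obstacle to be the inductive step's handling of the continuous, within-cell dependence of the value function: the continuous process is genuinely driven by the exact state $x' \in \qimdp'$, not by its cell, so the transition-interval feasibility cannot be applied directly to $V^N_\x$. The resolution is precisely the order of operations above --- first use the inductive hypothesis to coarsen the next-step value to a cell-constant quantity, and only then invoke feasibility --- and this is also where the assumption that the discretization respects the regions of interest is essential, since it guarantees that $L$, and hence the \dfa transition, are constant on each cell. Secondary care is needed for the sup-value induction in dealing with history-dependent strategies, which I would circumvent by arguing directly on $V^{*}_\x$ and its Bellman recursion rather than assuming an optimal strategy exists, and for the measurability of the value functions and the passage to the infinite horizon.
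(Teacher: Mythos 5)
Your proposal is correct and is in essence the same argument as the paper's: the paper disposes of this theorem by remarking that it is a ``straightforward consequence of Theorem~\ref{thm:transition}'' --- i.e., that by the bounds of Theorem~\ref{thm:transition} the true transition kernel, restricted to cells, is always a feasible adversary choice in the \imdp, so the robust and optimistic values bracket both $P[\pathX \models \Spec \mid \nearoptpolicy_\x, \pathX(0)=x]$ and $\popt(x)$ --- which is precisely what your product lift, coarsen-then-dominate induction, and passage to the infinite horizon make rigorous. You have supplied the details (including the cell-constancy of labels, the handling of history-dependent strategies via the Bellman recursion, and the measurability caveats) that the paper leaves entirely implicit, rather than taking a different route.
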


\noindent
    Theorem \ref{thm:correctness} is a straightforward consequence of Theorem \ref{thm:transition} and guarantees that the probability that Process \eqref{eq:system} satisfies $\Spec$  is contained between $\plow$ and $\pup$. In order to quantify the distance of $\pi^{\epsilon}_{\x}$ from the optimal strategy $\optpolicy_\x$, we compute the optimal upper bound probability $\pup^*$ correspondent to the strategy that optimistically maximizes the probability of reaching $Q^\PIMDPS_F$. In fact, recall that $\nearoptpolicy_{\x}$ corresponds to the strategy that maximizes the lower bound of reaching $Q^\PIMDPS_F.$ It follows  that  for any $x \in \FullSet,$  
    $$ \opterror_{x} = |\pup^*(z(x)) - \plow(z(x))|.$$

Given a probability bound $\pthresh$ on the satisfaction of formula $\Spec$, we use $\plow$ and $\pup$ to classify each initial state $x_0 \in \FullSet$ as one of the following:
\begin{equation*}
    x \in 
    \begin{cases}
      \Qyes & \text{ if }\plow(z(x)) \geq \pthresh \\
      \Qno & \text{ if }\pup(z(x)) < \pthresh \\
      \Qposs & \text{ otherwise. }
    \end{cases}
\end{equation*}
Given initial state $x_0$, we can guarantee that $\Spec$ is definitely satisfied by the underlying system with at least $\pthresh$ if $x_0 \in \Qyes$.  If $x_0\in \Qno$, then we can guarantee that the underlying system never meets the probability threshold $\pthresh$.
For $x_0 \in \Qposs$, no guarantees relative to the threshold $\pthresh$ can be given. 

\section{Case Studies}
\label{sec:case-studies}
We illustrate the proposed framework in three case studies using linear and nonlinear switched systems.
In all the demonstrations, the compact set is $\FullSet=[-2,2]\times[-2,2]$. 
We use a uniform grid of size 0.125 over $\FullSet$ to create $\Qimdp_\safe$ for our abstraction. 

\subsection{Linear Switched System with Three Modes }
We first demonstrate the framework on a three-mode linear switched system similar to the synthesis example presented in \cite{lahijanian2015formal}. We assume the dynamics in all three modes are unknown, i.e.,
\begin{equation*}
    \x_{k+1} = g_{\pu_k}(\x_k) + \noise_k,
\end{equation*}
where each mode is a linear system with $g_u(\x_k) = A_u \x_k$ for all $u \in \{1,2,3\}$, 
\begin{equation*}
    A_1 = \begin{bmatrix}
    0.4 & 0.1\\
    0 & 0.5
    \end{bmatrix},\;\;
    A_2 = \begin{bmatrix}
    0.4 & 0.5\\
    0 & 0.5
    \end{bmatrix},\;\;
    A_3 = \begin{bmatrix}
    0.4 & 0\\
    0.5 & 0.5
    \end{bmatrix},
\end{equation*}
and $\noise$ is drawn from a Gaussian distribution $\mathcal{N}(0,\sigma^2 I)$ truncated between $[-\sigma, \sigma]$ with $\sigma = 0.01$. 

Two-hundred i.i.d. data points per mode were sampled and propagated through the dynamics to create the dataset for regression.  
Figure~\ref{fig:three-mode} shows the partition of $\FullSet$ with labelled regions $Des$ and $Obs$ indicating ``Desired'' and ``Obstacle'' regions, respectively. 
With an abuse of notation, these are used to define the atomic propositions $\{Des, Obs\}$ and the \LTLf specification $$\Spec_1= \Globally (\neg Obs) \wedge \Eventually(Des),$$ 
which reads, ``Globally avoid Obstacles and eventually reach a Desired region''.

Using our framework, we learned the unknown dynamics and synthesized a robust and near-optimal switching strategy $\nearoptpolicy$.
Figure~\ref{fig:three-mode} shows the classification of each initial region with threshold probability $\pthresh=0.95$ under this strategy. Initial states with the $Des$ label belong to $\Qyes$ as they satisfy $\Spec_1$ while states with the $Obs$ label violate it and belong to $\Qno$. There are additional states belonging to $\Qyes$ such that actions dictated by $\nearoptpolicy$ drive the system to an accepting state with a high probability. These results closely resemble the results presented in \cite{lahijanian2015formal}, which assumed full knowledge of the dynamics, whereas here the dynamics are fully unknown and are estimated from a limited set of data.

\begin{figure}
    
\end{figure}

\begin{figure}
    \centering
    \newcommand\figwidth{0.23\textwidth}
    \newcommand\scaleoffset{\hspace*{22pt}\vspace*{-2pt}}
    \begin{subfigure}[b]{0.20\textwidth}
    \includegraphics[width=\textwidth]{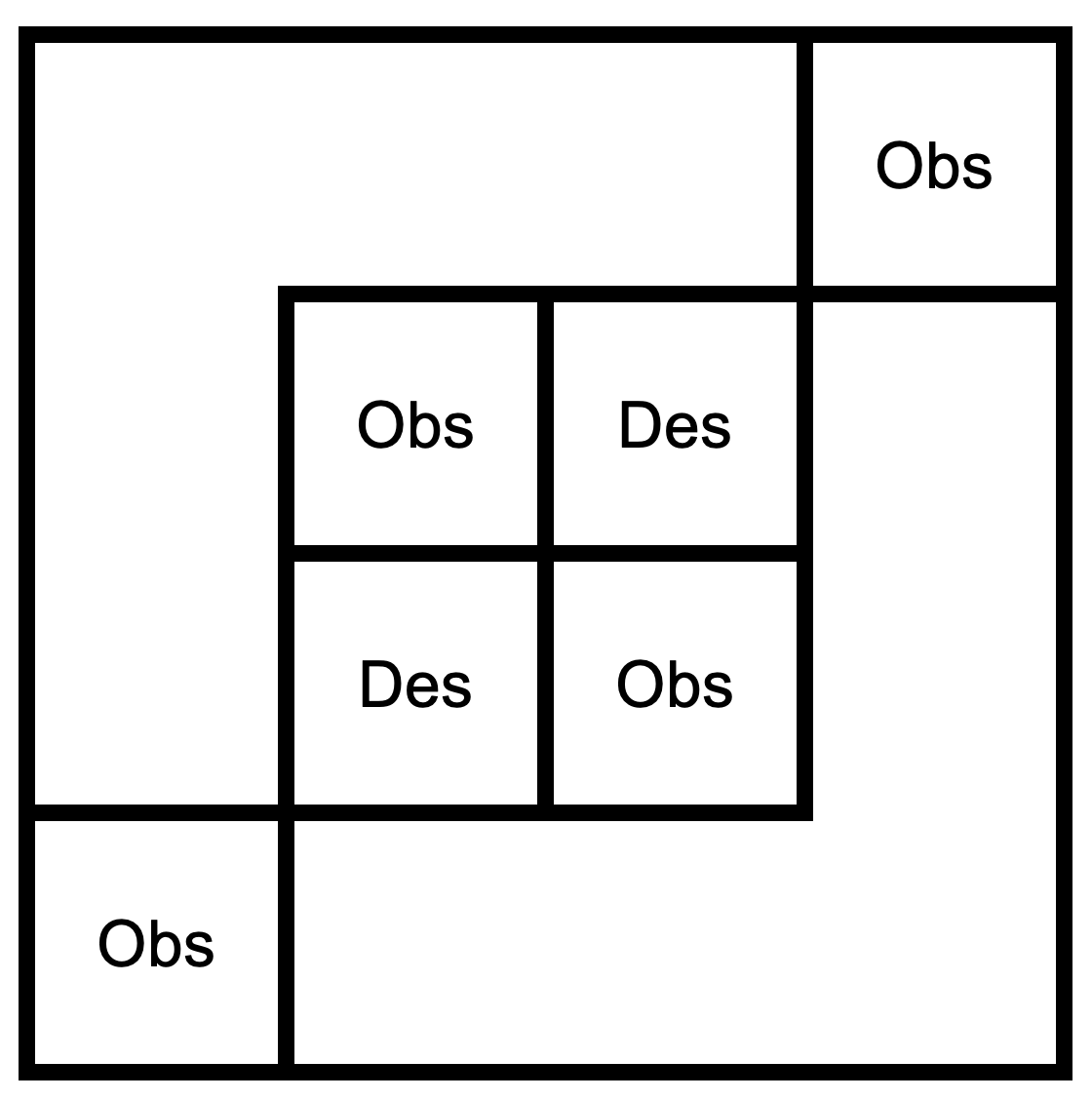}
    \label{fig:partition}
    \end{subfigure}%
    \begin{subfigure}[b]{\figwidth}
    \begin{subfigure}[b]{\textwidth}
    \scaleoffset\includegraphics[width=0.7\linewidth]{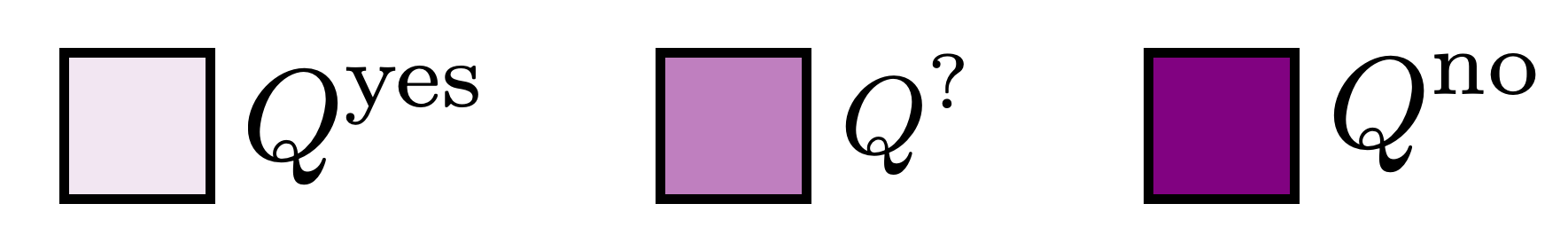}
    \end{subfigure}
    \includegraphics[width=\textwidth]{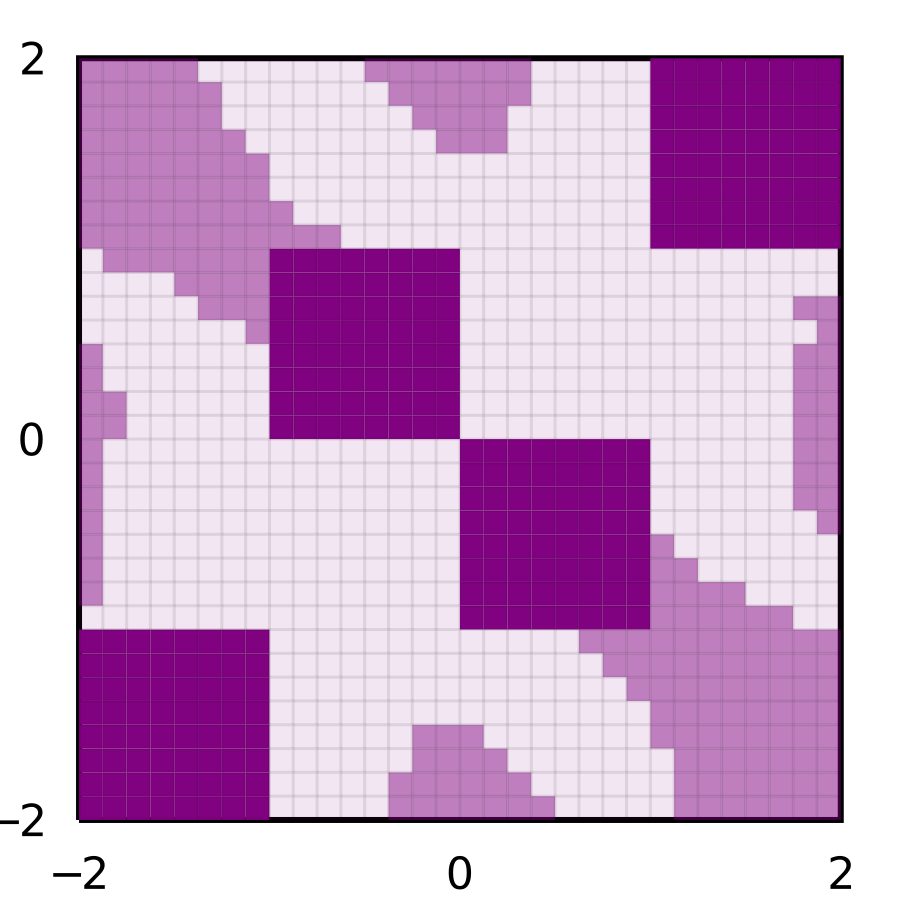}
    \end{subfigure}
    \caption{Region partition and classification of initial states using the strategy synthesized for the linear system and $\Spec_1$.}
    \label{fig:three-mode}
\end{figure}

\subsection{Parameter Choices}\label{sec:noise-demo}
We provide a brief look at the effect of choosing different values of $\eta$, the bounds on the noise components in Theorem~\ref{thm:transition}, on the synthesis results. For any choice of $\eta$, 
the optimal value of $\distBound$, the bounds on the suprema of the regression error components,
is then chosen to minimize (maximize) the upper-bound (lower-bound) of the transition probability in Theorem~\ref{thm:transition} as discussed previously. In the ideal case, the noise parameter primarily effects the lower-bound of the transition probability, as the optimal choice of $\distBound$ leaves the indicator function in \eqref{eq:lower-bound} with a value of zero. 

For the three-mode linear switched system above, the effect of changing $\eta$ uniformly in a na\"ive manner is shown in Figure~\ref{fig:noise-param}. The choices of $\eta$ are presented as fractions of the bounds on the truncated Gaussian distribution. As expected, the lower-bound on the probability of satisfaction decreases as the value of $\eta$ decreases. In a large part, this is due to the reduction in the lower-bound of the probability of staying within $\FullSet$. There is clearly an optimal trade-off to be made between $\distBound$ and $\eta$, which will be considered in future work.

\begin{figure}
    \newcommand\figwidth{0.23\textwidth}
    \includegraphics[width=\figwidth]{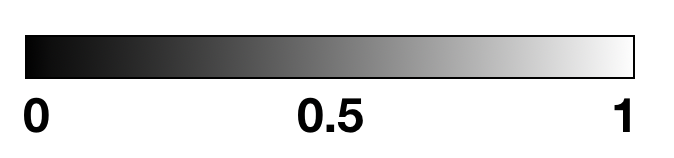}
    \begin{subfigure}{\figwidth}
    \includegraphics[width=\linewidth]{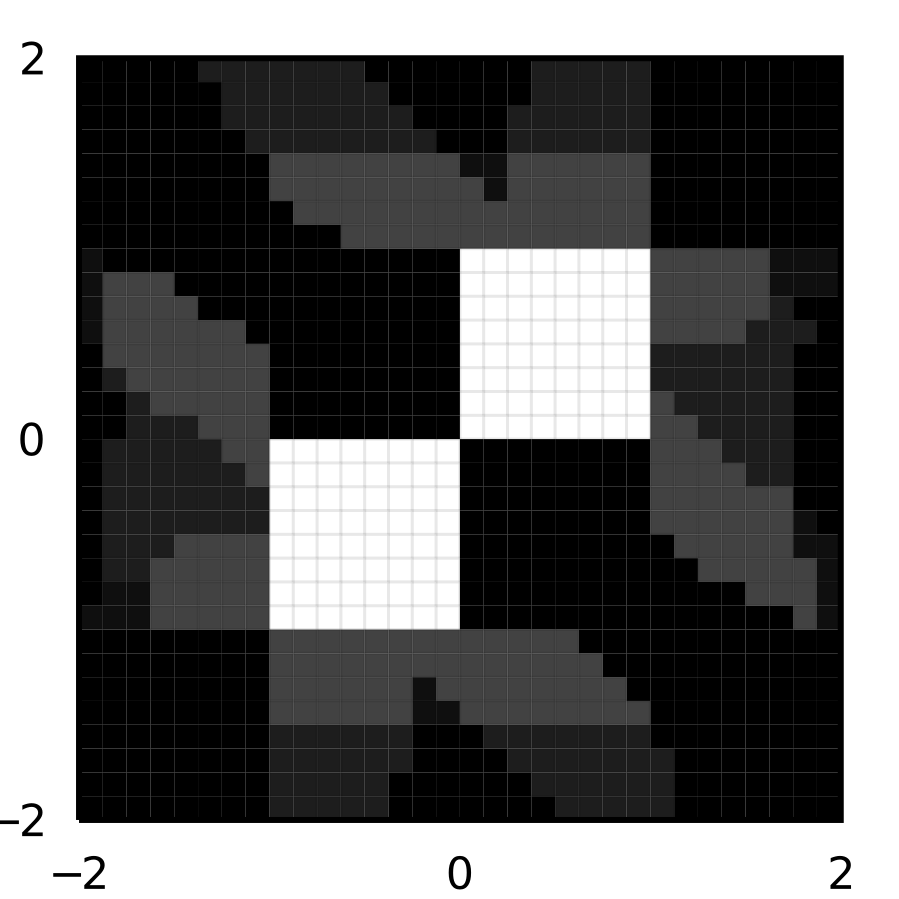}
    \caption{$\eta=0.96\sigma$}
    \end{subfigure}%
    \begin{subfigure}{\figwidth}
    \includegraphics[width=\linewidth]{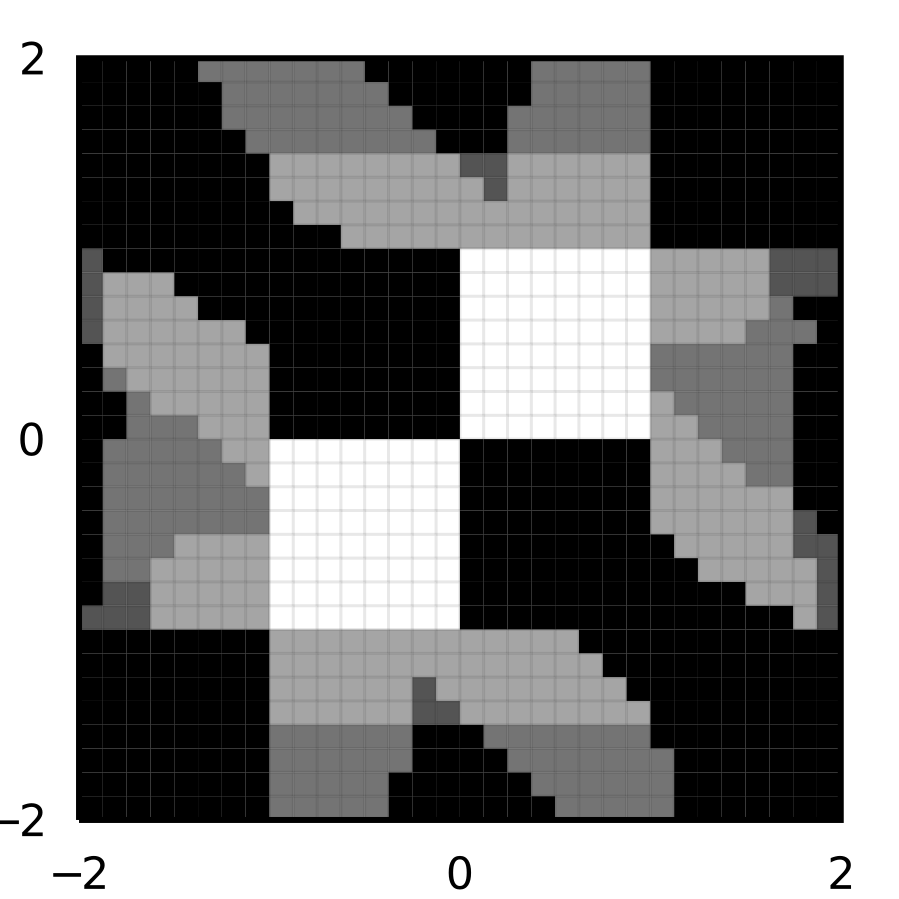}
    \caption{$\eta=0.97\sigma$}
    \end{subfigure}
    \begin{subfigure}{\figwidth}
    \includegraphics[width=\linewidth]{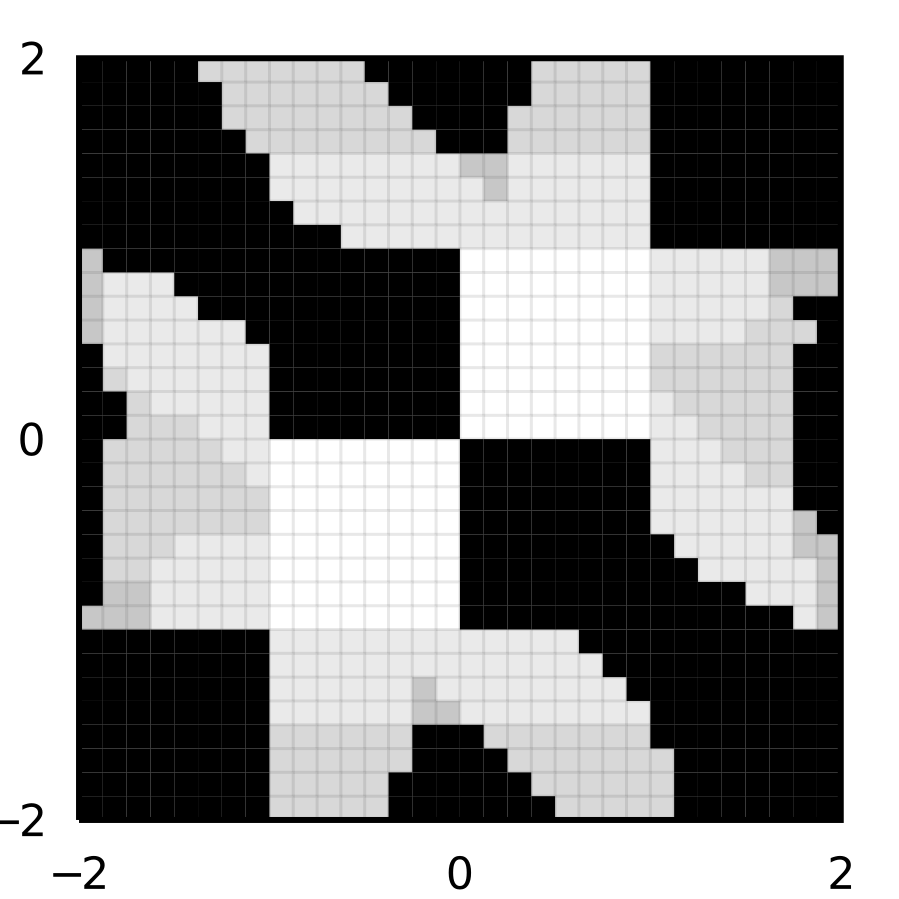}
    \caption{$\eta=0.985\sigma$}
    \end{subfigure}%
    \begin{subfigure}{\figwidth}
    \includegraphics[width=\linewidth]{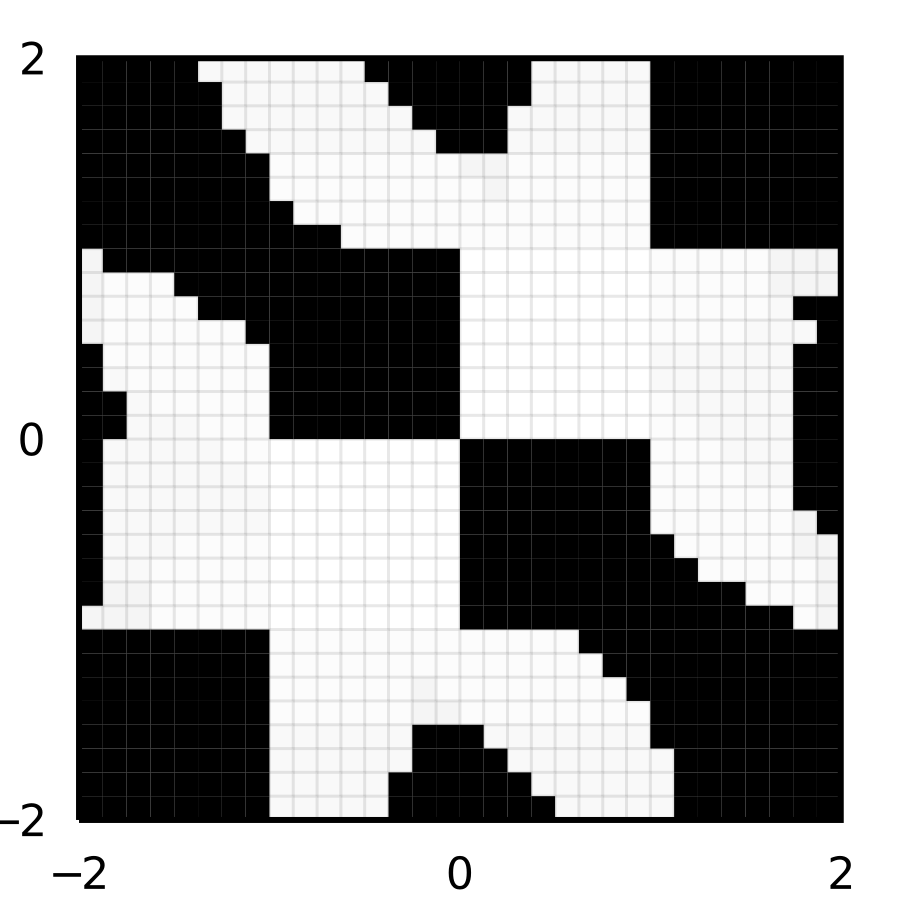}
    \caption{$\eta=0.995\sigma$}
    \end{subfigure}
    \caption{Effect on changing the parameter $\eta$ on the lower bound of the probability of satisfaction from each state.}
    \label{fig:noise-param}
\end{figure}

\subsection{Nonlinear Switched System with Four Modes}\label{sec:non-linear-ex}
Next, we synthesize a switching strategy for a nonlinear system with four modes, a known linear dynamics component, and an unknown nonlinear dynamics component. The form of the system is
\begin{equation*}
    \x_{k+1} = \x_k + g_{\pu_k}(\x_k) + \noise_k
\end{equation*}
The unknown dynamics are
\begin{equation*}
g_{u}(\x_k) =
\begin{cases}
[0.5 + 0.2\sin{\x_k^{(2)}}, 0.4\cos{\x_k^{(1)}}]^T & \text{ if } u=1\\
[-0.5 + 0.2\sin{\x_k^{(2)}}, 0.4\cos{\x_k^{(1)}}]^T & \text{ if } u=2\\
[0.4\cos{\x_k^{(2)}}, 0.5+0.2\sin{\x_k^{(1)}}]^T & \text{ if } u=3\\
[0.4\cos{\x_k^{(2)}}, -0.5+0.2\sin{\x_k^{(1)}}]^T & \text{ if } u=4
\end{cases}
\end{equation*}
where $x^{(i)}$ indicates the $i$-th component of the state. Three-hundred i.i.d. data points were generated from each mode with the same noise distribution. 

The region of interest is $Des$, which indicated by the black square in Figure~\ref{fig:nonlinear-simple}, and the specification is $$\Spec_2= \Eventually \, Des.$$
Figure~\ref{fig:nonlinear-simple} shows the synthesis and simulated result.
The black lines indicate simulated paths from a set of randomly selected initial states, and the star indicates the terminal state. Individual simulations were terminated if an accepting state in the \pimdp was reached, or if the specification was violated.
The size of $\Qyes$ in Figure~\ref{fig:nonlinear-simple} shows that the strategy $\nearoptpolicy$ can drive many states into $Des$ with a high probability. However, there are many states in $\Qno$. In these states, the \emph{upper-bound} of the probability of satisfying $\Spec_2$ induced by employing $\nearoptpolicy$ does not meet $\pthresh$. In other words, there is a significant chance of violating $\Spec_2$.

Figure~\ref{fig:gamma-example} shows upper bounds of error $\opterror$ at each state. Recall that 
$\opterror$ bounds the satisfaction probability distance under optimal and near-optimal strategies in Problem~\ref{prob:Syntesis}. 
The darker regions correspond to areas with $\opterror$ approaching zero, meaning $\nearoptpolicy$ is indeed near-optimal.
The lighter regions with $\opterror$ approaching one indicate that $\nearoptpolicy$ does not necessarily choose the optimal action.
This could be mitigated by collecting more data and performing a finer abstraction, or it is possible the system does not have sufficient control authority. 

\begin{figure}

    \newcommand\subfigwidth{\columnwidth}
    \newcommand\figwidth{0.7\columnwidth}
    \newcommand\partwidth{0.6\columnwidth}
    \newcommand\scalewidth{0.5\columnwidth}
    \newcommand\scaleoffset{\hspace*{18pt}\vspace*{-4pt}}
    
    \begin{subfigure}{\subfigwidth}
        \centering
        \scaleoffset\includegraphics[width=\scalewidth]{images/legend.png}
        \includegraphics[width=\figwidth]{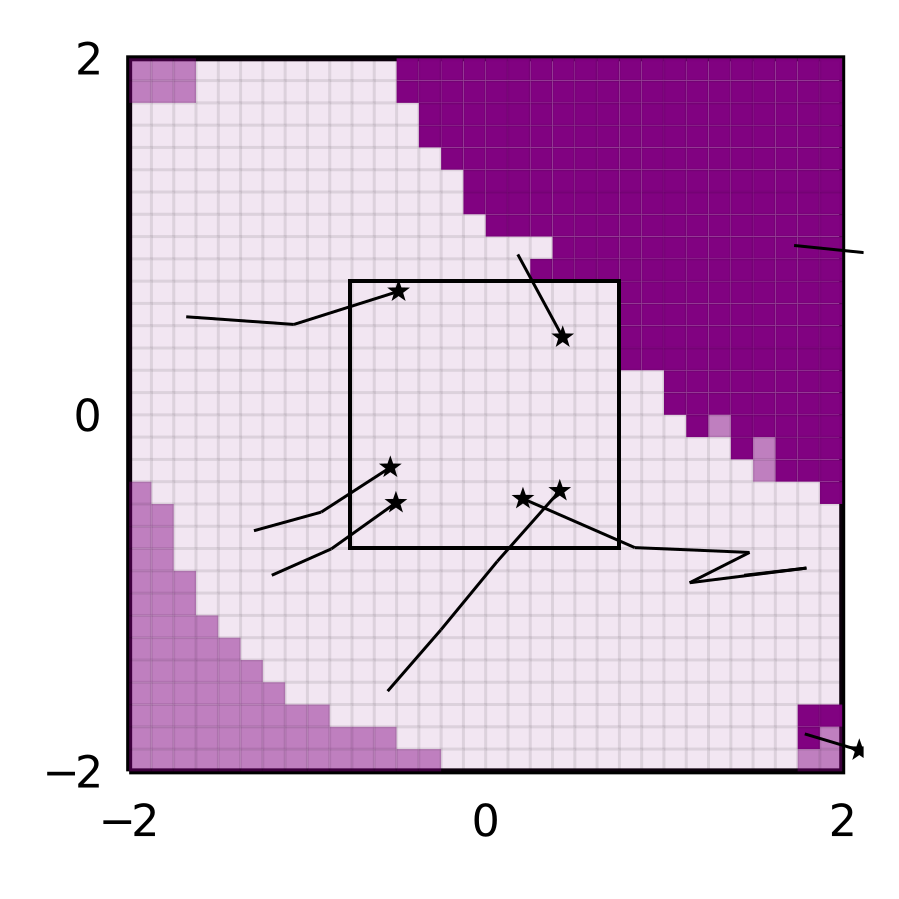}
        \caption{State classification and simulations.}
        \label{fig:nonlinear-simple}
    \end{subfigure}

    \begin{subfigure}{\subfigwidth}
        \centering
        \scaleoffset\includegraphics[width=\scalewidth]{images/scale.png}
        \includegraphics[width=\figwidth]{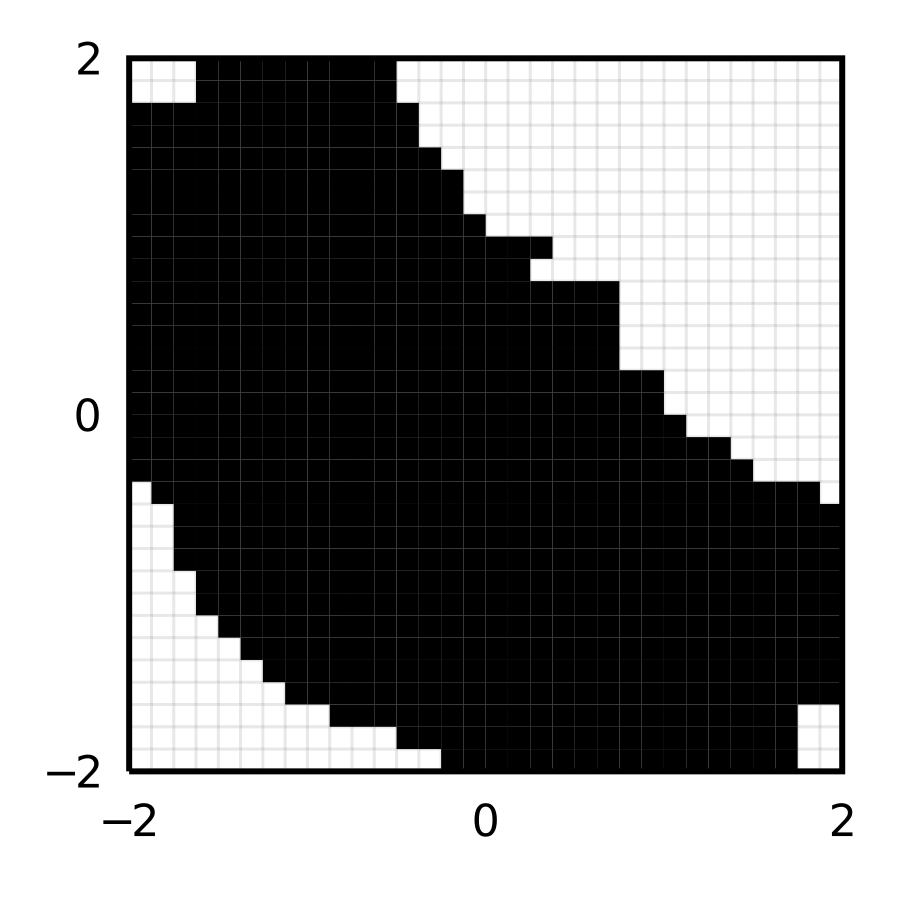}
        \caption{Optimization error $\opterror$ at each state.}
        \label{fig:gamma-example}
    \end{subfigure}
    
    \caption{Synthesis results and simulations for the nonlinear system for $\Spec_2$. The black square in (a) indicates the region with label $Des$.}
    \label{fig:nonlinear}
\end{figure}

Finally, we perform controller synthesis for the partially-known nonlinear switched system given the specification $$\Spec_3=\Globally(\neg O)\wedge \Eventually(D1)\wedge \Eventually(D2)$$ with two reachability objectives. Figure~\ref{fig:partition3} shows the partitioning of the space with labels $D1$, $D2$ and $O$ indicating ``Desired Location 1'', ``Desired Location 2'' and ``Obstacle'' regions respectively. We use the same abstraction we generated in the previous case. 

Figure~\ref{fig:nonlinear-multi-sims} shows several simulation results from different initial states, two of which are in $\Qposs$ and the others in $\Qyes$. The $\Qno$ set is comprised of only the $O$ regions, because starting in $O$ automatically violates the specification.
Much of the $\Qyes$ set is made up of a majority of $D2$ and some states starting in $D1$. There is a large amount of free space that can be driven into $D1$ with high probability. All paths but one terminate with satisfying $\Spec_3$, but a single path is driven into an obstacle.
Figure~\ref{fig:gamma-multi} shows that the optimal action has been found for many of the states, but there is a significant number of states with a trivial bound of $\opterror=1$. This metric can help identify areas for further data collection, or state discretization refinement.

\begin{figure}

    \newcommand\subfigwidth{\columnwidth}
    \newcommand\figwidth{0.7\columnwidth}
    \newcommand\partwidth{0.6\columnwidth}
    \newcommand\scalewidth{0.5\columnwidth}
    \newcommand\scaleoffset{\hspace*{18pt}\vspace*{-4pt}}
    
    \begin{subfigure}{\subfigwidth}
        \centering
        \includegraphics[width=\partwidth]{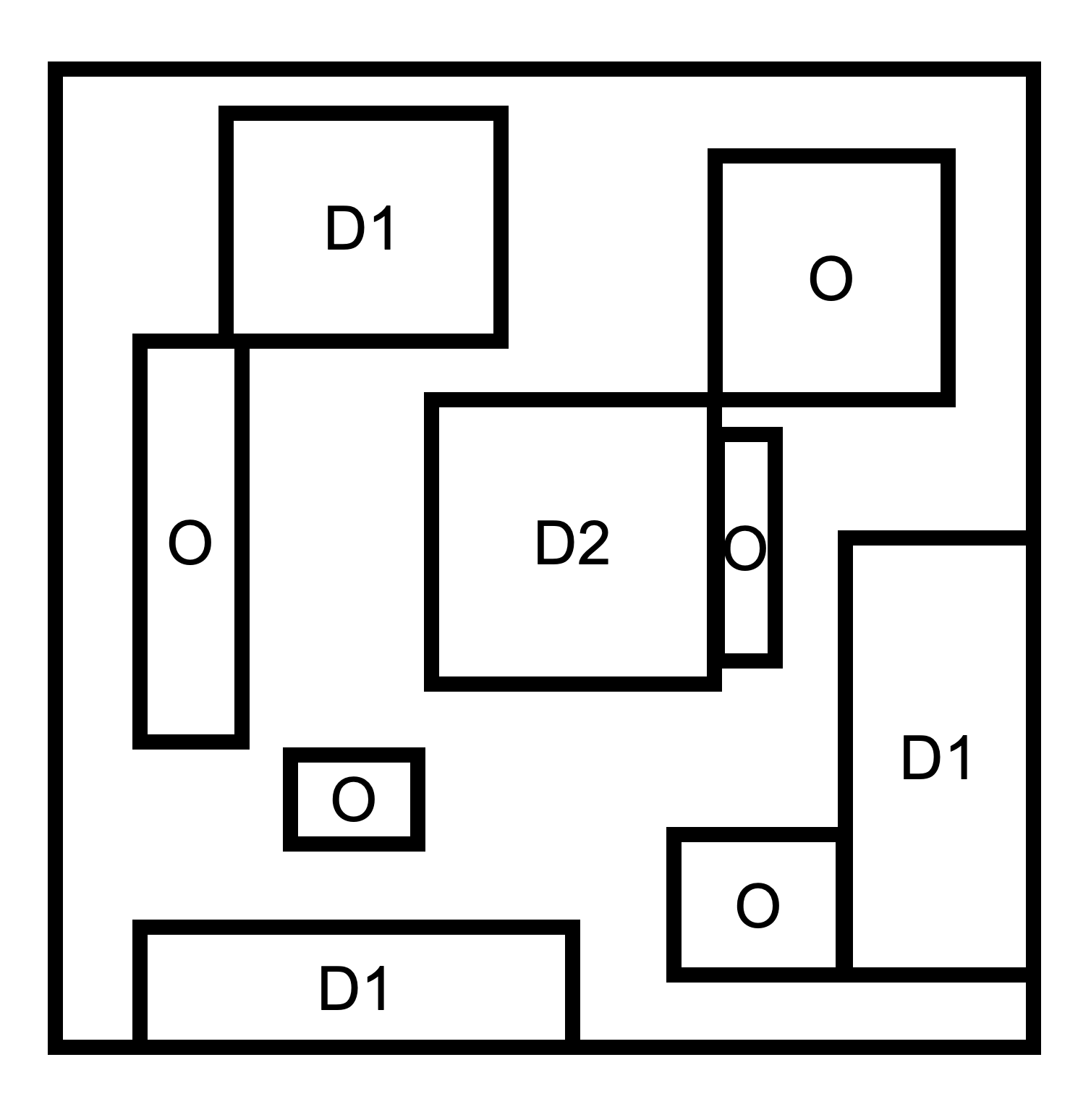}
        \caption{Regions of interest for $\Spec_3=\Globally(\neg O)\wedge \Eventually(D1)\wedge \Eventually(D2)$.}
        \label{fig:partition3}
    \end{subfigure}
    
    \begin{subfigure}{\subfigwidth}
            \centering
    \scaleoffset\includegraphics[width=\scalewidth]{images/legend.png}
             \includegraphics[width=\figwidth]{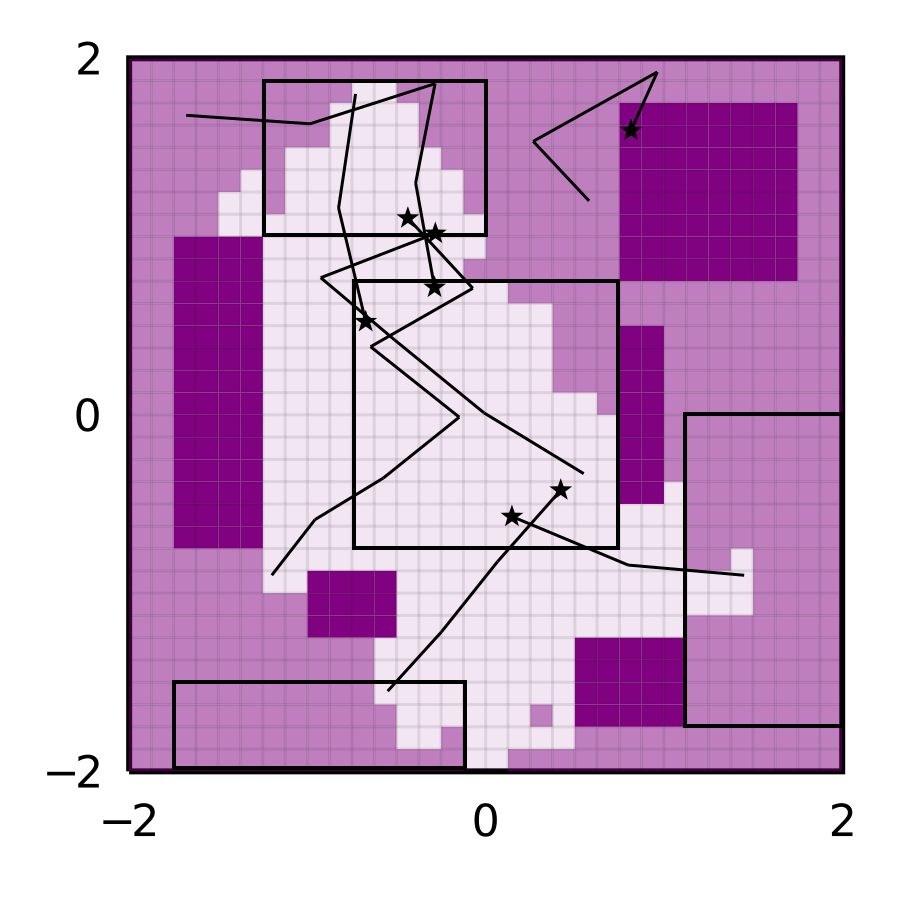}
            \caption{Classification of each state and simulated trajectories.}
            \label{fig:nonlinear-multi-sims}
    \end{subfigure}
    
    \begin{subfigure}{\subfigwidth}
        \centering
        \scaleoffset\includegraphics[width=\scalewidth]{images/scale.png}
        \includegraphics[width=\figwidth]{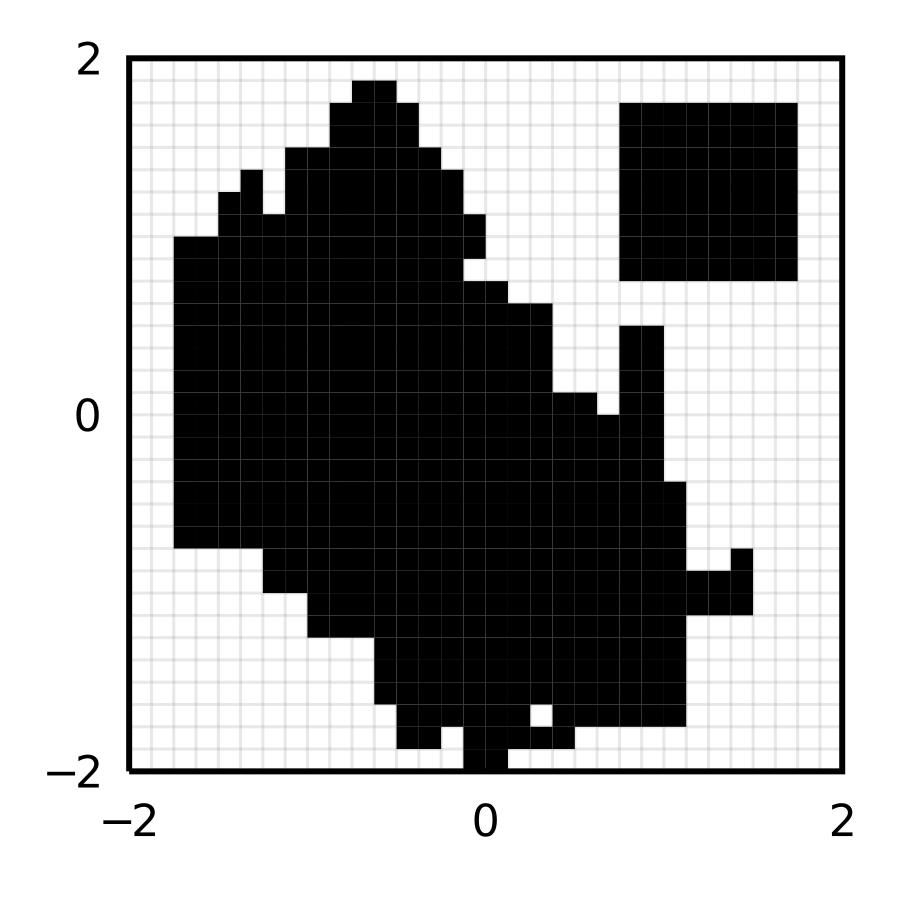}
        \caption{Optimization error $\opterror$ at each state.}
        \label{fig:gamma-multi}
    \end{subfigure}
    
    \caption{Synthesis results and simulations for the nonlinear system with $\Spec_3$.}
    \label{fig:nonlinear-multi}
\end{figure}

\section{Conclusion}

We developed a data-driven framework for synthesizing a near-optimal control strategy for partially unknown switched stochastic  systems with \LTLf specifications. The framework is based on abstraction to an uncertain Markov model that incorporates both the uncertainty given by the stochastic dynamics of the system and the uncertainty in learning the unknown dynamics of the system via GP regression. Our work makes a step towards formally safe and correct data-driven systems. However, many challenges are ahead in order to make our framework to scale to larger datasets and higher dimensional systems. In the future, we plan to consider sparse Gaussian processes 
as well as optimal techniques for parameter tuning and refinement.

\section{Acknowledgement}
This work was supported in part by
% the University of Colorado Boulder Autonomous Systems Interdisciplinary Research Theme, and
the NSF Center for Unmanned Aircraft Systems, NSF under
awards IIP - 1650468, and NSF 2039062.

\bibliographystyle{ACM-Reference-Format}
\bibliography{refs}

\end{document}
\endinput
%%
%% End of file `sample-sigconf.tex'.